\numberwithin{equation}{section}
\numberwithin{figure}{section}
\theoremstyle{plain}
\newtheorem{thm}{\protect\theoremname}
  \theoremstyle{plain}
  \newtheorem{lem}[thm]{\protect\lemmaname}
  \theoremstyle{plain}
  \newtheorem{prop}[thm]{\protect\propositionname}
  \providecommand{\lemmaname}{Lemma}
  \providecommand{\propositionname}{Proposition}
\providecommand{\theoremname}{Theorem}
\begin{document}

\title{$\mathcal{R}_{0}$ fails to predict the outbreak potential in the
presence of natural-boosting immunity}

\thanks{Both authors have equally contributed to this paper.}

\author{Yukihiko Nakata}

\author{Ryosuke Omori}

\address[Y. Nakata]{Department of Mathematics, Shimane University, 1060 Nishikawatsu-cho,
Matsue, Japan}

\email{ynakata@riko.shimane-u.ac.jp}

\address[R. Omori]{Division of Bioinformatics, Research Center for Zoonosis Control,
Hokkaido University, Sapporo, Hokkaido, Japan\\
 JST, PRESTO, 4-1-8 Honcho, Kawaguchi, Saitama, 332\textendash 0012,
Japan}

\email{omori@czc.hokudai.ac.jp}
\begin{abstract}
Time varying susceptibility of host at individual level due to waning
and boosting immunity is known to induce rich long-term behavior of
disease transmission dynamics. Meanwhile, the impact of the time varying
heterogeneity of host susceptibility on the shot-term behavior of
epidemics is not well-studied, even though the large amount of the
available epidemiological data are the short-term epidemics. Here
we constructed a parsimonious mathematical model describing the short-term
transmission dynamics taking into account natural-boosting immunity
by reinfection, and obtained the explicit solution for our model.
We found that our system show ``the delayed epidemic'', the epidemic
takes off after negative slope of the epidemic curve at the initial
phase of epidemic, in addition to the common classification in the
standard SIR model, i.e., ``no epidemic'' as $\mathcal{R}_{0}\leq1$
or normal epidemic as $\mathcal{R}_{0}>1$. Employing the explicit
solution we derived the condition for each classification. 

\keywords{Epidemic model, Short term-disease transmission dynamics, Natural-boosting
immunity, Final epidemic size }
\end{abstract}

\maketitle

\section{Introduction}

Modelling the transmission dynamics of infectious diseases and the
estimation of its model parameters are essential to understand the
transmission dynamics. Susceptible-infective-removed model, so-called
SIR model is known to be the simplest model to describe the transmission
dynamics \citep{Anderson:1992,Diekmann:2012}. The SIR model describes
transmission of pathogen from infective individuals to susceptible
individuals and removing infective individuals from the targeted host
population due to the establishment of immunity or death of host or
host immigration. Due to the wide variation in the natural history
of pathogen, many extended models from the basic SIR model have been
proposed so far.

An important extension is the time-evolution of susceptibility against
the infection with a pathogen. The basic SIR model describes that
the host immunity perfectly protects the host from reinfection over
time, then reinfection cannot occur forever. Meanwhile, reinfection
events are observed frequently among many infectious diseases, e.g.,
Coronavirus \citep{Isaacs:1983}, Respiratory syncytial virus \citep{Hall:1991},
Tuberculosis \citep{Verver:2005} and Hepatitis C virus \citep{vandeLaar:2009}.
One of considerable mechanisms of reinfection is waning immunity.
Decreased herd immunity by waning immunity of individuals induces
re-emergence of epidemic, and boosting immunity by re-vaccination
is required to control epidemics \citep{Barbarossa:2015}. Another
mechanism is imperfectness of immunity by an infection event. The
booster dose of vaccine is required to establish the high enough immunity
level to protect hosts from reinfection \citep{Siegrist:2008}, this
implies that the multiple exposures to the pathogen is required to
establish the high enough immunity level. Moreover, the enhancement
of susceptibility to reinfection is also observed among several infectious
diseases \citep{Wei:2011,Kohlmann:2004}.

Epidemic models incorporating variable susceptibility of recovered
individuals was formulated in the papers \citep{Kermack:1932,Kermack:1933}
by Kermack and McKendrick. However, the authors did not obtain a clear
biological conclusion \citep{Diekmann:1995,Inaba:2016}. In \citep{Inaba:2016,Inaba:2001}
the author performed stability analysis for the Kermack and McKendrick's
reinfection model formulated as a system of partial differential equations.
The existence and bifurcation of the endemic equilibrium is analyzed
in detail. Destabilization of the endemic equilibrium was shown to
be possible for epidemic models with waning immunity \citep{Diekmann:1982,Hethcote:2005,Nakata:2014}.
Previous modeling studies showed that waning and natural-boosting
immunity by exposure to the pathogens can trigger a counter-intuitive
effect of vaccination \citep{Lavine:2011}. It is suggested that waning
immunity in vaccinated hosts can trigger backward bifurcation of the
endemic equilibrium \citep{Arino:2003,Brauer:2004,Kribs:2000}. Estimating
the vaccine effectiveness is essential to control epidemics, however,
vaccine effectiveness reflects the complicated epidemiological dynamics
which is scaled by waning and natural-boosting immunity, e.g., boosting
and waning immunity can induce the periodic outbreak for the long-term
behavior \citep{Arinaminpathy:2012}.

Compared to the long-term behavior, the short-term behavior with waning
and boosting immunity is not well understood, although many field
data of the short-term epidemics have been analyzed using the model
without such waning and boosting immunity. As for short-term behavior,
the dynamics with constant immune protection rate against reinfection
has been studied so far while boosting and waning immunity change
the immune protection rate. In \citep{Katriel:2010} the author analyzed
transient dynamics of a reinfection epidemic model, ignoring the demographic
process in the model studied in \citep{Gomes:2004}. In the model,
reinfection of recovered individuals occurs, assuming that recovered
individuals have suitable susceptibility to the disease. It was shown
that the disease transmission dynamics qualitatively changes, when
the basic reproduction number crosses the reinfection threshold.

In this paper, we constructed a mathematical model taking into account
natural-boosting immunity. Since the time scale of waning immunity
is relatively longer than transmission dynamics, e.g., minimal annual
waning rate of immunity is $-2.9\%$ for rubella and $-1.6\%$ for
measles \citep{Kremer:2006}, compared to the infectious periods,
$11$ days for rubella \citep{Edmunds:2000} and $14$ days for measles
\citep{Anderson:1992}, we here focus on only boosting immunity. Since
boosting and waning immunity can induce periodic outbreak for the
long-term behavior \citep{Arinaminpathy:2012}, complicated epidemic
curve may be observed in the model for a short-term disease transmission
dynamics. We here obtain an explicit solution for the number of infective
individuals, consequently, we investigated how the short-term behavior
of the transmission dynamics is influenced by boosting immunity. The
shape of the short-term epidemic curve is analyzed in detail.

The paper is organized as follows. In Section 2 we formulate an epidemic
model, taking into account natural-boosting immunity, by a nonlinear
system of differential equations. The model includes the standard
SIR epidemic model and the reinfection epidemic model studied in \citep{Katriel:2010}
as special cases. In Section 3 we study the disease transmission dynamics
when the basic reproduction number, which is denoted by $\mathcal{R}_{0}$,
exceeds one. The number of the epidemic curve is shown to be one,
as is the case for the standard SIR epidemic model. In Section 3 we
consider disease transmission dynamics when $\mathcal{R}_{0}\leq1$.
Here we show that epidemic occurs even if $\mathcal{R}_{0}\leq1$,
due to the enhancement of susceptibility of recovered individuals.
We analyze the shape of the epidemic curve in detail. In Section 4,
the final size relation is derived from the explicit solutions in
the phase planes. In Section 6 we discuss our results for the future
works.

%\section{Introduction (Omori)}
%\begin{enumerate}
%\item Reinfection model with demography would show oscillation and periodic solutions. It is important to examine the number of peaks when one analyzes epidemic data. Here we show that epidemic curve has only one peak (when $R_{0}>1$?) in our model. Boosting and waning can induce oscillation (Grenfell) in the model for long term behavior. It is important to point out the mechanism of oscillation. 
%\item Example for Enhancement susceptibility: African Tryspanosoma, Varicella zoster virus.
%\item Note that even if $R_{0}<1$ epidemic occurs. Write down the condition for the outbreak (in terms of $R_{0}$ ??). 
%\item Sasaki and Anderson, -reinfection
%\item For each outbreak, the distribution of $S$ and $R$ changes dynamically. So the condition for the soutbreak changes are not necessary same. 
%\item Possible Reviwers: M.G.M. Gomes, G. Katriel.
%\item Expressions of $R$ in terms of $S$. 
%\end{enumerate}

\section{An epidemic model with natural-boosting immunity}

First of all let us introduce the epidemic model studied in \citep{Katriel:2010}.
In the model it is assumed that the infectious disease induces partial
immunity. Denote by $S(t),\ I(t)$ and $R(t)$ the proportions of
susceptible population, infective population and recovered population
at time $t$, respectively. The partial immunity model is formulated
as \begin{subequations}\label{eq:SIRmodel} 
\begin{align}
S'(t) & =-\beta S(t)I(t),\label{eq:SIR1}\\
I^{\prime}(t) & =\beta S(t)I(t)+\beta\sigma R(t)I(t)-\gamma I(t),\label{eq:SIR2}\\
R^{\prime}(t) & =\gamma I(t)-\beta\sigma R(t)I(t).\label{eq:SIR3}
\end{align}
\end{subequations}The positive parameters $\beta$ and $\gamma$
are the transmission coefficient and the recovery rate, respectively.
The parameter $\sigma$ is the relative susceptibility of recovered
individuals, who have been infected at least once and have recovered
from the infection. We obtain the standard SIR epidemic model, if
$\sigma=0$, i.e., recovered individuals are completely protected
from the infection.

In this paper the partial immunity model (\ref{eq:SIRmodel}) is modified
as follows. When a recovered individual is exposed to the force of
infection, immunity is boosted with probability $1-\alpha$ so that
one obtains permanent immunity to the disease, while one contracts
the disease again with probability $\alpha$. The partial immunity
model (\ref{eq:SIRmodel}) is modified as \begin{subequations}\label{eq:SIRBmodel}
\begin{align}
S'(t) & =-\beta S(t)I(t),\label{eq:SIRB1}\\
I^{\prime}(t) & =\beta S(t)I(t)-\gamma I(t)+\beta\sigma\alpha I(t)R(t),\label{eq:SIRB2}\\
R^{\prime}(t) & =\gamma I(t)-\beta\sigma I(t)R(t),\label{eq:SIRB3}\\
B^{\prime}(t) & =\beta\sigma\left(1-\alpha\right)I(t)R(t)\label{eq:SIRB4}
\end{align}
\end{subequations}with the following initial conditions 
\begin{align*}
 & S(0)>0,\ I(0)>0,\ R(0)\geq0,\ B(0)\geq0,\\
 & S(0)+I\left(0\right)+R\left(0\right)+B\left(0\right)=1.
\end{align*}
Here $B(t)$ denotes the proportion of population with permanent immunity
at time $t$. We obtain the model (\ref{eq:SIRmodel}) by $\alpha=1$
and the SIR model by $\alpha=0$. Throughout the paper, we assume
the following two conditions 
\begin{align}
0 & <\sigma,\label{eq:ass_sigma}\\
0 & <\alpha<1.\label{eq:ass_alpha}
\end{align}

\section{One epidemic peak for $\mathcal{R}_{0}>1$}

We define the basic reproduction number by
\[
\mathcal{R}_{0}:=\frac{\beta}{\gamma}\left(S\left(0\right)+\alpha\sigma R\left(0\right)\right).
\]
The basic reproduction number is the expected number of secondary
cases produced by one infective individual in the expected \textit{one}
infectious period, $\frac{1}{\gamma}$ in the initial phase of epidemic.
Noting that both susceptible and recovered populations, which compose
the initial host population, have susceptibility to the disease, we
may call $\mathcal{R}_{0}$ the basic reproduction number, although
$\mathcal{R}_{0}$ is conventionally called the \textit{effective}
reproduction number \citep{Inaba:2017}.

From (\ref{eq:SIRB1}) and (\ref{eq:SIRB2}) one obtains the following
\begin{equation}
\frac{dI}{dS}=-\frac{\gamma}{\beta S}\left(\mathcal{R}(S,R)-1\right)\label{eq:IS}
\end{equation}
and 
\begin{equation}
\frac{dI(t)}{dt}=\gamma I(t)\left(\mathcal{R}(S(t),R(t))-1\right),\label{eq:It2}
\end{equation}
where 
\[
\mathcal{R}(S,R):=\frac{\beta}{\gamma}\left(S+\alpha\sigma R\right),\ S\geq0,\ R\geq0.
\]
Noting that $\mathcal{R}\left(S(0),R(0)\right)=\mathcal{R}_{0}$,
it is easy to see that 
\begin{align*}
\mathcal{R}_{0}>1\Leftrightarrow & I^{\prime}(0)>0,\\
\mathcal{R}_{0}=1\Leftrightarrow & I^{\prime}(0)=0,\\
\mathcal{R}_{0}<1\Leftrightarrow & I^{\prime}(0)<0,
\end{align*}
i.e., if $\mathcal{R}_{0}>1$ then the epidemic curve initially grows,
while if $\mathcal{R}_{0}<1$ then the epidemic curve initially decays.

First we show that $R(t)$ can be expressed in terms of $S(t)$. 
\begin{lem}
\label{lem:RbyS}It holds that 
\begin{equation}
R(t)=\frac{\gamma}{\sigma\beta}\left(1-\left(1-\frac{\sigma\beta}{\gamma}R\left(0\right)\right)\left(\frac{S\left(t\right)}{S\left(0\right)}\right)^{\sigma}\right),\ t\geq0.\label{eq:RbyS2}
\end{equation}
\end{lem}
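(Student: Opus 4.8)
The plan is to eliminate the time variable between the equations for $S$ and $R$ and solve the resulting linear first-order ODE for $R$ as a function of $S$. First I would note that, since $I(0)>0$ and $I'(t)=\gamma I(t)\bigl(\mathcal R(S(t),R(t))-1\bigr)$, the function $I$ stays strictly positive on $[0,\infty)$; hence by (\ref{eq:SIRB1}) we have $S(t)=S(0)\exp\bigl(-\beta\int_0^t I(s)\,ds\bigr)>0$ and $t\mapsto S(t)$ is strictly decreasing, so it is a legitimate change of variable. Dividing (\ref{eq:SIRB3}) by (\ref{eq:SIRB1}) then gives
\[
\frac{dR}{dS}=\frac{\gamma I-\beta\sigma IR}{-\beta SI}=-\frac{\gamma}{\beta S}+\frac{\sigma}{S}R,
\]
that is, the linear equation $\dfrac{dR}{dS}-\dfrac{\sigma}{S}R=-\dfrac{\gamma}{\beta S}$.

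Next I would solve this with the integrating factor $S^{-\sigma}$: the equation becomes $\dfrac{d}{dS}\bigl(S^{-\sigma}R\bigr)=-\dfrac{\gamma}{\beta}S^{-\sigma-1}$, which integrates to $S^{-\sigma}R=\dfrac{\gamma}{\beta\sigma}S^{-\sigma}+C$, i.e. $R=\dfrac{\gamma}{\beta\sigma}+CS^{\sigma}$. Imposing $R=R(0)$ at $S=S(0)$ fixes $C=\bigl(R(0)-\tfrac{\gamma}{\beta\sigma}\bigr)S(0)^{-\sigma}$, and substituting back and factoring out $\tfrac{\gamma}{\sigma\beta}$ yields exactly (\ref{eq:RbyS2}).

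An alternative that sidesteps even the justification of the change of variable: define $\widetilde R(t)$ to be the right-hand side of (\ref{eq:RbyS2}), check $\widetilde R(0)=R(0)$, and differentiate. Using (\ref{eq:SIRB1}) one finds $\widetilde R'(t)=\gamma I(t)\bigl(1-\tfrac{\sigma\beta}{\gamma}R(0)\bigr)\bigl(S(t)/S(0)\bigr)^{\sigma}$, while a direct computation shows $\gamma I(t)-\beta\sigma I(t)\widetilde R(t)$ equals the same expression. Hence $\widetilde R$ and $R$ solve the same linear initial value problem $R'=\gamma I-\beta\sigma IR$, $R(0)=R(0)$, with $I$ regarded as a known continuous function, so they coincide by uniqueness.

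There is no serious obstacle in this argument; the only point needing a word of care is the positivity and strict monotonicity of $S$ that legitimizes treating $S$ as the independent variable in the first route — which is precisely what the second (verification) route avoids. I would present the integrating-factor computation as the main line and record the verification as a quick cross-check, or, for brevity, simply give the verification directly.
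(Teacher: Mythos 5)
Your proposal is correct and follows essentially the same route as the paper: both eliminate time to obtain the ODE $\frac{dR}{dS}=-\frac{\gamma}{\beta S}+\frac{\sigma}{S}R$ and integrate it with the initial data at $S=S(0)$, the paper by separation of variables and you by an integrating factor. Your integrating-factor computation (and your verification-by-uniqueness alternative) has the small advantage of treating the degenerate case $1-\frac{\sigma\beta}{\gamma}R(0)=0$ uniformly, which the paper must dispose of separately after dividing by $1-\sigma b R$.
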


\begin{proof}
Let us write $b$ for $\frac{\beta}{\gamma}$. Assume that $1-\sigma bR\left(0\right)\not=0$
holds. From the equations (\ref{eq:SIRB1}) and (\ref{eq:SIRB3})
we have 
\begin{equation}
\frac{dR}{dS}=-\frac{1-\sigma bR}{bS}.\label{eq:rs}
\end{equation}
Using the separation of variables, we obtain 
\begin{equation}
\left(\frac{S(t)}{S(0)}\right)^{\sigma}=\frac{1-\sigma bR\left(t\right)}{1-\sigma bR\left(0\right)},\label{eq:RbyS1}
\end{equation}
thus (\ref{eq:RbyS2}) follows. It is easy to see that the equality
in (\ref{eq:RbyS2}) also holds, if $1-\sigma bR\left(0\right)=0$.
\end{proof}
From Lemma \ref{lem:RbyS} we have 
\begin{equation}
R(t)=r(S(t)),\label{eq:r(S)}
\end{equation}
where

\begin{equation}
r(S):=\frac{\gamma}{\sigma\beta}\left(1-\left(1-\frac{\sigma\beta}{\gamma}R\left(0\right)\right)\left(\frac{S}{S(0)}\right)^{\sigma}\right),\ 0\leq S\leq S(0).\label{eq:R(S)-1}
\end{equation}
To analyze the epidemic curve, we study the function $\mathcal{R}(S,R)$
with $R=r\left(S\right)$. Let 
\[
\hat{\mathcal{R}}(S):=\mathcal{R}(S,r\left(S\right))
\]
We then compute the first and second derivatives of $\hat{\mathcal{R}}$:
\begin{align}
\hat{\mathcal{R}}^{\prime}\left(S\right) & =\frac{\beta}{\gamma}\left(1+\alpha\sigma r^{\prime}\left(S\right)\right).\label{eq:Rpri}\\
\hat{\mathcal{R}}^{\prime\prime}\left(S\right) & =\frac{\beta\alpha\sigma}{\gamma}r^{\prime\prime}\left(S\right).\label{eq:Rpri2}
\end{align}
From the equation (\ref{eq:rs}) in Lemma \ref{lem:RbyS}, it is easy
to obtain the following result.
\begin{lem}
\label{lem:mono} One has
\begin{align}
r^{\prime}(S) & =-\frac{\gamma}{\beta}\left(1-\sigma bR(0)\right)\frac{S^{\sigma-1}}{S(0)^{\sigma}},\label{eq:rrpri}\\
r^{\prime\prime}(S) & =\left(\sigma-1\right)\frac{1}{S}r^{\prime}(S).\label{eq:rrpri2}
\end{align}
\end{lem}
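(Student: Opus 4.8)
The plan is to obtain both identities by a direct differentiation of the closed form for $r$ in (\ref{eq:R(S)-1}), using the relation (\ref{eq:rs}) of Lemma \ref{lem:RbyS} only to shorten the bookkeeping. Throughout, as in that lemma, write $b=\frac{\beta}{\gamma}$.

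First I would compute $r^{\prime}(S)$. In (\ref{eq:R(S)-1}) the only $S$-dependence is carried by the factor $\left(\frac{S}{S(0)}\right)^{\sigma}$, whose derivative is $\sigma S^{\sigma-1}/S(0)^{\sigma}$; differentiating term by term then yields (\ref{eq:rrpri}) at once. (Equivalently and even more slickly, one can start from (\ref{eq:rs}), which says $r^{\prime}(S)=-\frac{1-\sigma b\,r(S)}{bS}$, and substitute the identity $1-\sigma b\,r(S)=\left(1-\sigma b R(0)\right)\left(\frac{S}{S(0)}\right)^{\sigma}$, which is just (\ref{eq:RbyS1}); dividing by $bS$ and using $1/b=\gamma/\beta$ gives the same formula.)

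For the second derivative I would observe that (\ref{eq:rrpri}) exhibits $r^{\prime}(S)$ as a constant multiple of the power $S^{\sigma-1}$, say $r^{\prime}(S)=c\,S^{\sigma-1}$ with $c$ independent of $S$. Differentiating once more gives $r^{\prime\prime}(S)=c(\sigma-1)S^{\sigma-2}=(\sigma-1)S^{-1}\cdot c\,S^{\sigma-1}=(\sigma-1)\frac{1}{S}r^{\prime}(S)$, which is (\ref{eq:rrpri2}). A fully equivalent route is to differentiate (\ref{eq:rs}) implicitly and then insert (\ref{eq:rrpri}) again, but the power-rule observation is the cleanest.

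There is no real obstacle here; the only point meriting a word is the degenerate case $1-\sigma b R(0)=0$, in which $r$ reduces to the constant $\frac{\gamma}{\sigma\beta}$, so that $r^{\prime}\equiv 0$ and $r^{\prime\prime}\equiv 0$ and both asserted identities hold trivially (each side being $0$). Hence (\ref{eq:rrpri}) holds on $0\le S\le S(0)$ and (\ref{eq:rrpri2}) on $0<S\le S(0)$ without exception.
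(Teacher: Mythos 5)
Your computation is correct and matches the paper's (unstated but clearly intended) argument: the paper simply remarks that the formulas follow easily from (\ref{eq:rs}), which is exactly the alternative route you describe, and your primary route of differentiating the closed form (\ref{eq:R(S)-1}) is the same elementary calculation. The remark on the degenerate case $1-\sigma bR(0)=0$ is a harmless bonus.
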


Note that $r$ is a monotone function, thus $\hat{\mathcal{R}}$ has
at most one extremum. 

We now show the standard epidemic case if $\mathcal{R}_{0}>1$ holds. 
\begin{prop}
\label{prop:hatRgraph1}Let us assume that $\mathcal{R}_{0}>1$ holds.
Then 
\begin{equation}
\hat{\mathcal{R}}(0)=\alpha<1<\hat{\mathcal{R}}(S(0))=\mathcal{R}_{0}.\label{eq:edge}
\end{equation}
holds and there exists a unique root of 
\[
\hat{\mathcal{R}}(S)=1,\ 0<S<S(0).
\]
\end{prop}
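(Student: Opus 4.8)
The plan is to read off the values of $\hat{\mathcal{R}}$ at the two endpoints of $[0,S(0)]$ from the closed form (\ref{eq:R(S)-1}) for $r$, then obtain existence of a root from the intermediate value theorem and uniqueness from the fact, already recorded above, that $\hat{\mathcal{R}}$ has at most one extremum.

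First I would compute the endpoint values. Putting $S=S(0)$ makes $\left(\frac{S}{S(0)}\right)^{\sigma}=1$, so the bracket in (\ref{eq:R(S)-1}) reduces to $\frac{\sigma\beta}{\gamma}R(0)$, giving $r(S(0))=R(0)$ and hence $\hat{\mathcal{R}}(S(0))=\mathcal{R}(S(0),R(0))=\mathcal{R}_{0}>1$ by hypothesis. Putting $S=0$, and using $\sigma>0$ so that $S\mapsto S^{\sigma}$ extends continuously to the origin with value $0$, gives $r(0)=\frac{\gamma}{\sigma\beta}$ and $\hat{\mathcal{R}}(0)=\frac{\beta}{\gamma}\,\alpha\sigma\cdot\frac{\gamma}{\sigma\beta}=\alpha<1$ by (\ref{eq:ass_alpha}). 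This establishes (\ref{eq:edge}); since $\hat{\mathcal{R}}$ is continuous on $[0,S(0)]$ with values straddling $1$, the intermediate value theorem supplies at least one root of $\hat{\mathcal{R}}(S)=1$ in $(0,S(0))$.

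For uniqueness I would use that $\hat{\mathcal{R}}''$ has a constant sign on $(0,S(0))$ — this follows from (\ref{eq:Rpri2}), Lemma \ref{lem:mono} and the monotonicity of $r$ — so that $[0,S(0)]$ splits into at most two subintervals on each of which $\hat{\mathcal{R}}$ is monotone. If $\hat{\mathcal{R}}$ is monotone throughout, injectivity forces the root to be unique. If it has an interior maximum at $\xi$, then on $[\xi,S(0)]$ it decreases down to $\mathcal{R}_{0}>1$, hence stays above $1$ there and has no root, while on $[0,\xi]$ it increases from $\alpha<1$ to $\hat{\mathcal{R}}(\xi)\ge\mathcal{R}_{0}>1$, crossing the level $1$ exactly once. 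If it has an interior minimum at $\xi$, then symmetrically it stays at or below $\alpha<1$ on $[0,\xi]$ and increases across $1$ exactly once on $[\xi,S(0)]$. In every case there is precisely one root, which is the claim.

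The whole argument is essentially a sign chase, and no single step is genuinely hard; the one place deserving care is the uniqueness step, since a convex or concave function can in general meet a horizontal level twice, and it is exactly the pair of endpoint inequalities in (\ref{eq:edge}) that excludes a second crossing. The boundary behaviour at $S=0$ when $\sigma<1$ is the only other minor point, and it is resolved by the continuity of $S\mapsto S^{\sigma}$ at the origin.
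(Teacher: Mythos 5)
Your argument is correct and follows essentially the same route as the paper: compute the endpoint values $\hat{\mathcal{R}}(0)=\alpha$ and $\hat{\mathcal{R}}(S(0))=\mathcal{R}_{0}$, get existence from the intermediate value theorem, and get uniqueness from the fact (via Lemma \ref{lem:mono} and the constant sign of $r'$) that $\hat{\mathcal{R}}$ has at most one interior extremum. You merely spell out the case analysis on the extremum that the paper compresses into ``Therefore, from (\ref{eq:edge}), we obtain the conclusion.''
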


\begin{proof}
It is easy to see that (\ref{eq:edge}) holds. First assume that $r^{\prime}(S)\geq0$
for $0\leq S\leq S(0)$. Then, from (\ref{eq:Rpri}), one can see
that $\hat{\mathcal{R}}$ is an increasing function. Thus we obtain
the conclusion. Next assume that $r^{\prime}(S)<0$ for $0\leq S\leq S(0)$.
By Lemma \ref{lem:mono}, one sees that $\hat{\mathcal{R}}$ has at
most one extremum for $0\leq S\leq S(0)$. Therefore, from (\ref{eq:edge}),
we obtain the conclusion.
\end{proof}
Then, from Proposition \ref{prop:hatRgraph1} and Lemma \ref{lem:limitexists}
in Appendix \ref{sec:AppA}, we obtain the following result.
\begin{thm}
\label{thm:outbreak_R0>1}Let us assume that $\mathcal{R}_{0}>1$
holds. Then there is a $t_{p}>0$ such that $I$ is monotonically
increasing for $t\in\left(0,t_{p}\right)$ and monotonically decreasing
for $t>t_{p}$. It holds $\lim_{t\to\infty}I(t)=0$. 
\end{thm}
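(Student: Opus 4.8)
The plan is to reduce the behaviour of $I$ to the monotone dynamics of $S$. First I would observe that, since $I(0)>0$ and the right-hand side of \eqref{eq:SIRB2} vanishes whenever $I=0$, uniqueness of solutions forces $I(t)>0$ for all $t\ge 0$; hence $S'(t)=-\beta S(t)I(t)<0$, so $S$ is strictly decreasing, and by Lemma \ref{lem:limitexists} the limit $S_\infty:=\lim_{t\to\infty}S(t)$ exists with $0\le S_\infty<S(0)$. Substituting \eqref{eq:r(S)} into \eqref{eq:It2} gives $I'(t)=\gamma I(t)\bigl(\hat{\mathcal R}(S(t))-1\bigr)$, so the sign of $I'(t)$ coincides with the sign of $\hat{\mathcal R}(S(t))-1$.

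Next I would locate the threshold value of $S$. By Proposition \ref{prop:hatRgraph1} there is a unique $S_{\ast}\in(0,S(0))$ with $\hat{\mathcal R}(S_{\ast})=1$; moreover, since $\hat{\mathcal R}$ is continuous on $[0,S(0)]$ with $\hat{\mathcal R}(0)=\alpha<1$ and $\hat{\mathcal R}(S(0))=\mathcal R_0>1$ by \eqref{eq:edge}, and $S_{\ast}$ is the only solution of $\hat{\mathcal R}(\cdot)=1$ on $[0,S(0)]$, the intermediate value theorem yields $\hat{\mathcal R}(S)<1$ for $S\in[0,S_{\ast})$ and $\hat{\mathcal R}(S)>1$ for $S\in(S_{\ast},S(0)]$. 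I then claim $S_\infty<S_{\ast}$. Indeed, if $S_\infty\ge S_{\ast}$, then $S(t)>S_\infty\ge S_{\ast}$ for every finite $t$, so $I'(t)>0$ for all $t$ and $I$ is increasing; being bounded above by $1$, $I$ would converge to a positive limit, whence $\int_0^\infty I(s)\,ds=\infty$ and $S(t)=S(0)\exp\!\bigl(-\beta\int_0^t I(s)\,ds\bigr)\to 0$, contradicting $S_\infty\ge S_{\ast}>0$. Thus $S_\infty<S_{\ast}$, and since $S$ is continuous and strictly decreasing from $S(0)>S_{\ast}$, there is a unique $t_p>0$ with $S(t_p)=S_{\ast}$. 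For $0<t<t_p$ we have $S(t)>S_{\ast}$, hence $I'(t)>0$; for $t>t_p$ we have $S(t)<S_{\ast}$, hence $I'(t)<0$; this is exactly the asserted monotonicity, with $I'(t_p)=0$.

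Finally I would establish $\lim_{t\to\infty}I(t)=0$. On $(t_p,\infty)$ the function $I$ is positive and decreasing, so $I_\infty:=\lim_{t\to\infty}I(t)\ge 0$ exists. If $I_\infty>0$, then as before $S(t)\to 0$, so by continuity $\hat{\mathcal R}(S(t))\to\hat{\mathcal R}(0)=\alpha$, and there is $t_1>t_p$ with $\hat{\mathcal R}(S(t))-1\le-\tfrac{1-\alpha}{2}<0$ for $t\ge t_1$; then $I'(t)\le-\tfrac{\gamma(1-\alpha)}{2}I(t)$ for $t\ge t_1$, which forces $I(t)\le I(t_1)\,e^{-\gamma(1-\alpha)(t-t_1)/2}\to 0$, contradicting $I_\infty>0$. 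Hence $I_\infty=0$.

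The only delicate point I anticipate is ruling out the degenerate case $S_\infty\ge S_{\ast}$ — that is, showing the epidemic necessarily turns over — which is handled by plugging the boundedness of $I$ back into the integral representation $S(t)=S(0)\exp(-\beta\int_0^t I)$; the same device also produces the final limit $I\to 0$. Everything else is a routine sign analysis of $\hat{\mathcal R}(S(t))-1$ along the decreasing trajectory $S(t)$.
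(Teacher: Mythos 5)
Your proof is correct and follows essentially the same route as the paper: Proposition \ref{prop:hatRgraph1} pins down the unique sign change of $\hat{\mathcal{R}}(S)-1$, and the strict monotone decrease of $S$ converts this into a single peak of $I$. The only difference is that you explicitly supply two details the paper delegates to Lemma \ref{lem:limitexists} or leaves implicit, namely that $S(t)$ actually crosses the root $S_{\ast}$ and that $I(t)\to 0$ (the appendix proves the latter by integrating $S'+I'+\alpha R'=-\gamma(1-\alpha)I$ rather than by your exponential-decay estimate), and both of your arguments for these points are sound.
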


\section{Delayed epidemic for $\mathcal{R}_{0}\leq1$}

In the standard SIR model, when $\mathcal{R}_{0}\leq1$ holds, then
the epidemic curve monotonically decreases and infective population
tends to $0$ eventually as time goes to infinity. The situation changes
in the model with boosting immunity (\ref{eq:SIRBmodel}), due to
the susceptibility of the recovered individuals. In particular, if
$\sigma>1$ then there is a possible delayed outbreak as the recovered
population increases which will induce the epidemic later even if
$\mathcal{R}_{0}\leq1$. The basic reproduction number, which characterizes
the initial dynamics, is not a sufficient criterion to determine the
outbreak due to the recovered population.

First let us consider a simple case that $\sigma\leq1$ holds. We
have the standard scenario: if $\mathcal{R}_{0}\leq1$ then the epidemic
does not occur. Subsequently we study the disease transmission dynamics
when $\sigma>1$. We show that enhancement of susceptibility after
the infection can induce an epidemic later.

\subsection{$\sigma\leq1$}

We show that the infective population is monotonically decreasing
for $t\geq0$, similar to the SIR model, when $\mathcal{R}_{0}\leq1$. 
\begin{prop}
\label{prop:hatRgraph1-1}Let us assume that $\mathcal{R}_{0}\leq1$
and $\sigma\leq1$ holds. Then 
\[
\mathcal{\hat{R}}(S)\leq1,\ 0\leq S\leq S(0).
\]
\end{prop}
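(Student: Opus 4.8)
The plan is to reduce the claim to the two endpoint values of $\hat{\mathcal{R}}$ on $[0,S(0)]$ together with a convexity argument. First I would record those endpoints: a direct substitution into (\ref{eq:R(S)-1}) gives $r(0)=\gamma/(\sigma\beta)$ (here $\sigma>0$ is used so that $(S/S(0))^{\sigma}\to 0$) and $r(S(0))=R(0)$, hence $\hat{\mathcal{R}}(0)=\frac{\beta}{\gamma}\alpha\sigma r(0)=\alpha$ and $\hat{\mathcal{R}}(S(0))=\mathcal{R}(S(0),R(0))=\mathcal{R}_{0}$. Under the hypotheses $\alpha<1$ (from (\ref{eq:ass_alpha})) and $\mathcal{R}_{0}\le 1$, both endpoint values are $\le 1$; the only real work is to prevent $\hat{\mathcal{R}}$ from bulging above $1$ in the interior.

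Next I would split on the sign of $r'$, which by (\ref{eq:rrpri}) is the sign of $-(1-\sigma bR(0))$ and is constant on $(0,S(0)]$. If $1-\sigma bR(0)\le 0$, then $r'(S)\ge 0$, and (\ref{eq:Rpri}) gives $\hat{\mathcal{R}}'(S)=\frac{\beta}{\gamma}\bigl(1+\alpha\sigma r'(S)\bigr)>0$, so $\hat{\mathcal{R}}$ is increasing on $[0,S(0)]$ and therefore $\hat{\mathcal{R}}(S)\le\hat{\mathcal{R}}(S(0))=\mathcal{R}_{0}\le 1$ throughout. In particular this also disposes of the degenerate case $1-\sigma bR(0)=0$, where $r$ is constant and $\hat{\mathcal{R}}$ is affine and increasing.

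The remaining case is $1-\sigma bR(0)>0$, so $r'(S)<0$ on $(0,S(0)]$, and this is exactly where $\sigma\le 1$ is needed. By (\ref{eq:rrpri2}), $r''(S)=(\sigma-1)\frac{1}{S}r'(S)$ is the product of the nonpositive factor $\sigma-1$, the positive factor $1/S$, and the negative factor $r'(S)$, hence $r''(S)\ge 0$ on $(0,S(0))$. Then (\ref{eq:Rpri2}) yields $\hat{\mathcal{R}}''(S)=\frac{\beta\alpha\sigma}{\gamma}r''(S)\ge 0$, so (using continuity of $\hat{\mathcal{R}}$ on $[0,S(0)]$) the function $\hat{\mathcal{R}}$ is convex on $[0,S(0)]$. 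A convex function on a closed interval attains its maximum at an endpoint, so $\hat{\mathcal{R}}(S)\le\max\{\hat{\mathcal{R}}(0),\hat{\mathcal{R}}(S(0))\}=\max\{\alpha,\mathcal{R}_{0}\}\le 1$ for all $S\in[0,S(0)]$, which completes the argument.

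The main subtlety, and the reason this needs more than the argument behind Proposition \ref{prop:hatRgraph1}, is that "at most one extremum" (which is all Lemma \ref{lem:mono} gave there) does not by itself rule out an interior maximum of $\hat{\mathcal{R}}$ exceeding $1$ even when both endpoint values lie below $1$. What rescues the argument is that $\sigma\le 1$ forces $r$, and hence $\hat{\mathcal{R}}$, to be \emph{convex} rather than merely unimodal, so any interior critical point is a minimum and the maximum is pushed to the boundary. I would flag this convexity step as the one to state carefully (including the continuity of $\hat{\mathcal{R}}$ at $S=0$, where $r'$ may blow up when $\sigma<1$ but convexity still holds); the rest is bookkeeping with the formulas already established.
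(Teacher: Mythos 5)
Your proof is correct and follows essentially the same route as the paper's: compare endpoint values $\hat{\mathcal{R}}(0)=\alpha<1$ and $\hat{\mathcal{R}}(S(0))=\mathcal{R}_{0}\le 1$, split on the sign of $r^{\prime}$ (determined by $1-\sigma b R(0)$), and use Lemma \ref{lem:mono} to control the interior behaviour in the decreasing case. Your explicit convexity phrasing ($r^{\prime\prime}\ge 0$ when $\sigma\le 1$ and $r^{\prime}<0$, hence the maximum of $\hat{\mathcal{R}}$ sits at an endpoint) is a slightly cleaner finish than the paper's appeal to $\lim_{S\downarrow 0}r^{\prime}(S)=-\infty$ together with ``at most one minimum'' --- that limit is literally $-\infty$ only for $\sigma<1$, while your argument covers $\sigma=1$ uniformly.
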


\begin{proof}
Note that 

\begin{equation}
\hat{\mathcal{R}}(0)=\alpha<1,\ \hat{\mathcal{R}}(S(0))=\mathcal{R}_{0}\leq1\label{eq:edge-1}
\end{equation}
holds. Assume that $r^{\prime}(S)\geq0$ for $0<S<S(0)$. Then $\hat{\mathcal{R}}$
is an increasing function, thus we obtain the conclusion. Next assume
that $r^{\prime}(S)<0$ for $0<S<S(0)$. In this case one sees that
\[
\lim_{S\downarrow0}r^{\prime}(S)=-\infty\implies\lim_{S\downarrow0}\hat{\mathcal{R}}^{\prime}(S)=-\infty.
\]
By Lemma \ref{lem:mono}, one sees that $\hat{\mathcal{R}}$ has at
most one minimum for $0\leq S\leq S(0)$. Therefore we obtain the
conclusion.
\end{proof}
Then, from Propositions \ref{prop:hatRgraph1-1} and Lemma \ref{lem:limitexists}
in Appendix \ref{sec:AppA}, we obtain the following result.
\begin{thm}
Let us assume that $\mathcal{R}_{0}\leq1$ and $\sigma\leq1$ hold.
Then $I$ is monotonically decreasing for $t\geq0$. It holds $\lim_{t\to\infty}I(t)=0$. 
\end{thm}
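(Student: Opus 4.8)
The plan is to read the sign of $I'$ off equation~(\ref{eq:It2}) once the trajectory has been pinned down by means of Proposition~\ref{prop:hatRgraph1-1}, and then to promote the resulting monotonicity to the assertion about $\lim_{t\to\infty}I(t)$. First I would check that the solution stays in the region where Proposition~\ref{prop:hatRgraph1-1} applies: the conservation law $S+I+R+B\equiv 1$ with nonnegativity keeps the solution global, from (\ref{eq:SIRB2}) the component $I$ is continuous and stays positive, and integrating (\ref{eq:SIRB1}) gives $S(t)=S(0)\exp\bigl(-\beta\int_0^t I(u)\,du\bigr)$, so that $0<S(t)\le S(0)$ for all $t\ge 0$ and $S$ is non-increasing. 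Writing $R(t)=r(S(t))$ as in (\ref{eq:r(S)}) (Lemma~\ref{lem:RbyS}), we have $\mathcal{R}(S(t),R(t))=\hat{\mathcal{R}}(S(t))$ with $S(t)\in[0,S(0)]$, and Proposition~\ref{prop:hatRgraph1-1} then gives $\hat{\mathcal{R}}(S(t))\le 1$ for every $t\ge 0$.

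Substituting this into (\ref{eq:It2}) yields
\begin{equation*}
I'(t)=\gamma I(t)\bigl(\hat{\mathcal{R}}(S(t))-1\bigr)\le 0,\qquad t\ge 0,
\end{equation*}
since $I(t)>0$, so $I$ is monotonically decreasing (indeed strictly, since $S$ is strictly decreasing while $\{S:\hat{\mathcal{R}}(S)=1\}$ is finite). Being decreasing and bounded below by $0$, $I$ has a limit $I_\infty\ge 0$, and likewise the monotone bounded function $S$ has a limit $S_\infty\ge 0$; that $I_\infty=0$ is the content of Lemma~\ref{lem:limitexists}. The argument behind it runs as follows: integrating $S'=-\beta S I$ gives $\int_0^\infty S(u)I(u)\,du=S(0)-S_\infty<\infty$, and since the integrand converges to $S_\infty I_\infty$ this forces $S_\infty I_\infty=0$; if $S_\infty>0$ then already $I_\infty=0$, while if $S_\infty=0$ then continuity of $\hat{\mathcal{R}}$ at $0$ together with $\hat{\mathcal{R}}(0)=\alpha<1$ gives $\hat{\mathcal{R}}(S(t))-1\le(\alpha-1)/2<0$ for all large $t$, whence a Gronwall estimate forces $I$ to decay exponentially and again $I_\infty=0$.

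The monotonicity half of the statement is essentially free once Proposition~\ref{prop:hatRgraph1-1} is available; the only step that needs genuine care is the last one, namely excluding $I_\infty>0$ in the degenerate case $S_\infty=0$. This is precisely where one must use the strict inequality $\hat{\mathcal{R}}(0)=\alpha<1$, not merely $\hat{\mathcal{R}}\le 1$, so that the effective reproduction number seen by the remaining infectives stays bounded below $1$ as $S\to 0$, which is what is needed to kill off $I$.
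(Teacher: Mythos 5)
Your proof is correct, and its first half is exactly the paper's route: Proposition \ref{prop:hatRgraph1-1} gives $\hat{\mathcal{R}}(S(t))\le 1$ along the trajectory, and (\ref{eq:It2}) then yields $I'(t)=\gamma I(t)\bigl(\hat{\mathcal{R}}(S(t))-1\bigr)\le 0$. Where you genuinely diverge is the limit claim. The paper's Lemma \ref{lem:limitexists} does not pass through $\int_0^\infty S(u)I(u)\,du<\infty$ and a case split on $S_\infty$; it first obtains existence of all four limits from the monotonicity of $S$, $R$, $B$ and the conservation law, and then integrates the single identity $S'+I'+\alpha R'=-\gamma(1-\alpha)I$: the left side has bounded total integral while the right side would integrate to $-\infty$ if $I_\infty>0$, a contradiction that uses only $\alpha<1$ from (\ref{eq:ass_alpha}). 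That one computation settles $\lim_{t\to\infty}I(t)=0$ uniformly in every parameter regime, which is why the same lemma is cited again for Theorem \ref{thm:outbreak_R0>1} and Theorem \ref{theorem:delay}. Your argument is instead tailored to the present case: it needs $\hat{\mathcal{R}}\le 1$ everywhere plus the strict gap $\hat{\mathcal{R}}(0)=\alpha<1$ to force exponential decay when $S_\infty=0$ (and note that your step ``the integral converges and the integrand converges, hence $S_\infty I_\infty=0$'' is legitimate only because you have already secured the existence of $I_\infty$ from monotonicity). Both arguments are sound; yours is self-contained and makes visible exactly where $\alpha<1$ enters, while the paper's is shorter and reusable across all the theorems.
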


\begin{figure*}
\begin{centering}
\subfloat[Graph of $\hat{\mathcal{R}}(S)-1$]{\includegraphics[scale=0.7]{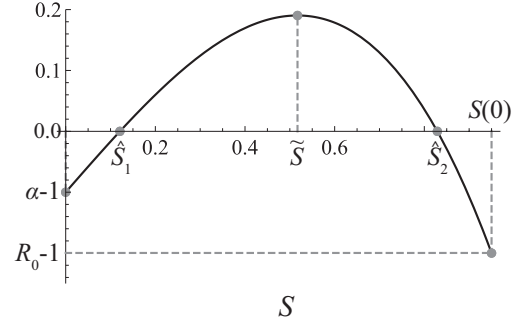}}
\par\end{centering}
\begin{centering}
\subfloat[Phase portrait of the solution $(I((t),S(t))$ in the $(I,S)$ plane]{\includegraphics[scale=0.7]{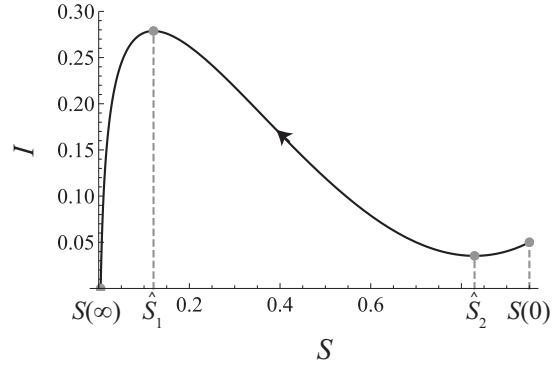}}
\par\end{centering}
\centering{}\caption{\label{fig:funcf}The graph of $\hat{\mathcal{R}}(S)-1$ for $0\leq S\leq S(0)$
is shown in (A). Here $\mathcal{R}_{0}\leq1$ and $\sigma>1$. The
parameters are chosen such that (\ref{eq:fdiff0_2}) and $\hat{\mathcal{R}}(\tilde{S})>1$
hold. The parametric curve $t\to(I(t),S(t))$ in the $(I,S)$ plane
is plotted in (B). It is shown that $I$ has a local minima at $S=\hat{S}_{2}$
and local maxima at $S=\hat{S}_{1}$.}
\end{figure*}

\subsection{$\sigma>1$}

In this subsection we consider the case that 
\begin{equation}
\mathcal{R}_{0}\leq1,\ \sigma>1\label{eq:ass}
\end{equation}
hold. We show the following results for the graph of $\hat{\mathcal{R}}$. 
\begin{prop}
\label{prop:f_sigma>1}Let us assume that $\mathcal{R}_{0}\leq1$
and $\sigma>1$ hold. 
\begin{enumerate}
\item If 
\begin{equation}
\frac{\beta}{\gamma}\left(S(0)+\alpha\sigma^{2}R(0)\right)\geq\alpha\sigma\label{eq:fdiff0_1}
\end{equation}
then $\hat{\mathcal{R}}(S)\leq1$ for $0\leq S\leq S(0)$. 
\item If 
\begin{equation}
\frac{\beta}{\gamma}\left(S(0)+\alpha\sigma^{2}R(0)\right)<\alpha\sigma\label{eq:fdiff0_2}
\end{equation}
then there is a unique maxima for $0<S<S(0)$ at $S=\tilde{S}$, where
\begin{equation}
\tilde{S}:=\left(\frac{\frac{\beta}{\gamma}S(0)}{\sigma\alpha\left(1-\frac{\sigma\beta}{\gamma}R(0)\right)}\right)^{\frac{1}{\sigma-1}}S(0)<S(0).\label{eq:Stilde}
\end{equation}
Then
\begin{enumerate}
\item If $\hat{\mathcal{R}}(\tilde{S})>1$ then there are two roots for
$\hat{\mathcal{R}}(S)=1$ for $0<S<S(0)$. Denote the roots by $\hat{S}_{1}$
and $\hat{S}_{2}$ such that 
\[
0<\hat{S}_{1}<\tilde{S}<\hat{S}_{2}<S(0),
\]
then 
\[
\hat{\mathcal{R}}(S)\begin{cases}
<1, & \hat{S}_{2}<S\leq S(0),\\
>1, & \hat{S}_{1}<S<\hat{S}_{2},\\
<1, & 0<S<\hat{S}_{1}.
\end{cases}
\]
\item If $\text{\ensuremath{\hat{\mathcal{R}}}}(\tilde{S})\leq1$ then $\hat{\mathcal{R}}(S)\leq1$
for $0\leq S\leq S(0)$.
\end{enumerate}
\end{enumerate}
\end{prop}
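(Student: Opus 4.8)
The plan is to determine the graph of $\hat{\mathcal{R}}$ on $[0,S(0)]$ entirely from its first two derivatives, which are given by (\ref{eq:Rpri})--(\ref{eq:Rpri2}) together with the explicit formulas for $r'$ and $r''$ in Lemma~\ref{lem:mono}. Two facts will be used throughout: the boundary values $\hat{\mathcal{R}}(0)=\alpha<1$ and $\hat{\mathcal{R}}(S(0))=\mathcal{R}_{0}\le1$, and, since $\sigma>1$, the limit $r'(0)=0$, which by (\ref{eq:Rpri}) gives $\hat{\mathcal{R}}'(0)=\beta/\gamma>0$; so the curve always starts off increasing at $S=0$.

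First I would split on the sign of $1-\sigma bR(0)$, where $b=\beta/\gamma$. If $1-\sigma bR(0)\le0$, then by (\ref{eq:rrpri}) we have $r'(S)\ge0$ on $[0,S(0)]$, hence $\hat{\mathcal{R}}'\ge\beta/\gamma>0$ by (\ref{eq:Rpri}), so $\hat{\mathcal{R}}$ is increasing and $\hat{\mathcal{R}}(S)\le\hat{\mathcal{R}}(S(0))=\mathcal{R}_{0}\le1$; a one-line computation shows $1-\sigma bR(0)\le0$ forces (\ref{eq:fdiff0_1}), so this case falls entirely under item~1. From now on assume $1-\sigma bR(0)>0$. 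Then $r'(S)<0$ on $(0,S(0)]$, and since $\sigma-1>0$, (\ref{eq:rrpri2}) gives $r''(S)<0$, so by (\ref{eq:Rpri2}) $\hat{\mathcal{R}}$ is strictly concave on $(0,S(0)]$; equivalently $\hat{\mathcal{R}}'$ is continuous and strictly decreasing there, starting from $\hat{\mathcal{R}}'(0)=\beta/\gamma>0$. Hence $\hat{\mathcal{R}}$ has at most one interior critical point, and solving $\hat{\mathcal{R}}'(S)=0$, i.e.\ $1+\alpha\sigma r'(S)=0$, for $S$ via (\ref{eq:rrpri}) produces exactly the value $\tilde{S}$ in (\ref{eq:Stilde}).

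The key bookkeeping step is the observation that $\tilde{S}\ge S(0)$ precisely when (\ref{eq:fdiff0_1}) holds and $\tilde{S}<S(0)$ precisely when (\ref{eq:fdiff0_2}) holds: indeed $(\tilde{S}/S(0))^{\sigma-1}=\frac{bS(0)}{\sigma\alpha(1-\sigma bR(0))}$, and clearing the positive denominator converts the inequality ``$<1$'' into (\ref{eq:fdiff0_2}) and ``$\ge1$'' into (\ref{eq:fdiff0_1}). If (\ref{eq:fdiff0_1}) holds, the unique zero of the decreasing function $\hat{\mathcal{R}}'$ lies at or beyond $S(0)$, so $\hat{\mathcal{R}}'>0$ on $[0,S(0))$, $\hat{\mathcal{R}}$ is increasing, and again $\hat{\mathcal{R}}\le\mathcal{R}_{0}\le1$ — this completes item~1. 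If (\ref{eq:fdiff0_2}) holds, then $0<\tilde{S}<S(0)$, with $\hat{\mathcal{R}}'>0$ on $(0,\tilde{S})$ and $\hat{\mathcal{R}}'<0$ on $(\tilde{S},S(0)]$, so $\tilde{S}$ is the unique maximum, which is the first assertion of item~2.

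Finally, under (\ref{eq:fdiff0_2}) I would read off the two sub-cases from this shape. On $[0,\tilde{S}]$ the function $\hat{\mathcal{R}}$ increases strictly from $\alpha<1$ to $\hat{\mathcal{R}}(\tilde{S})$, and on $[\tilde{S},S(0)]$ it decreases strictly from $\hat{\mathcal{R}}(\tilde{S})$ to $\mathcal{R}_{0}\le1$. If $\hat{\mathcal{R}}(\tilde{S})\le1$, then $\hat{\mathcal{R}}\le\hat{\mathcal{R}}(\tilde{S})\le1$ on the whole interval, which is item~2(b). If $\hat{\mathcal{R}}(\tilde{S})>1$, the intermediate value theorem applied to each strictly monotone branch yields exactly one root $\hat{S}_{1}\in(0,\tilde{S})$ and exactly one root $\hat{S}_{2}\in(\tilde{S},S(0)]$ of $\hat{\mathcal{R}}(S)=1$ (with $\hat{S}_{2}<S(0)$ when $\mathcal{R}_{0}<1$), and strict monotonicity on each branch gives the displayed sign pattern directly: $\hat{\mathcal{R}}<1$ on $(0,\hat{S}_{1})$ and on $(\hat{S}_{2},S(0)]$, and $\hat{\mathcal{R}}>1$ on $(\hat{S}_{1},\hat{S}_{2})$ — this is item~2(a). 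The only genuine obstacle is the algebraic identification of the critical-point location with the dichotomy (\ref{eq:fdiff0_1})/(\ref{eq:fdiff0_2}) and keeping the nested case split organized; the rest is monotonicity and the intermediate value theorem.
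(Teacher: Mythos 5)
Your proof is correct and follows essentially the same route as the paper: use the sign structure of $r'$ and $r''$ from Lemma \ref{lem:mono} to get monotonicity of $\hat{\mathcal{R}}'$, observe $\hat{\mathcal{R}}'(0)=\beta/\gamma>0$, locate the unique critical point $\tilde{S}$, identify the dichotomy (\ref{eq:fdiff0_1})/(\ref{eq:fdiff0_2}) with $\tilde{S}\ge S(0)$ versus $\tilde{S}<S(0)$ (the paper equivalently checks the sign of $\hat{\mathcal{R}}'$ at $S(0)$), and finish with the boundary values and the intermediate value theorem. Your explicit handling of the degenerate case $1-\sigma\frac{\beta}{\gamma}R(0)\le 0$ is a welcome extra detail the paper leaves implicit, but it does not change the argument.
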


\begin{proof}
For $\sigma>1$ one sees that 
\[
\lim_{S\downarrow0}r^{\prime}(S)=0\implies\lim_{S\downarrow0}\hat{\mathcal{R}}^{\prime}(S)=\frac{\beta}{\gamma}>0.
\]
From the monotonicity of $\hat{\mathcal{R}}^{\prime}$, if $\lim_{S\uparrow S(0)}\hat{\mathcal{R}}^{\prime}(S)>0$
then $\hat{\mathcal{R}}(S)\leq1$ for $0\leq S\leq S(0)$ follows.
Computing 

\begin{align*}
\lim_{S\uparrow S(0)}\hat{\mathcal{R}}^{\prime}(S) & =\frac{\beta}{\gamma}\left(1+\alpha\sigma r^{\prime}(S(0))\right)\\
 & =\frac{1}{S(0)}\left(\frac{\beta}{\gamma}\left(S(0)+\alpha\sigma^{2}R(0)\right)-\alpha\sigma\right),
\end{align*}
one can see that (\ref{eq:fdiff0_1}) is equivalent to that $\lim_{S\uparrow S(0)}\hat{\mathcal{R}}^{\prime}(S)>0$
holds. 

Next assume that (\ref{eq:fdiff0_2}) holds. Then 
\[
\lim_{S\uparrow S(0)}\hat{\mathcal{R}}^{\prime}(S)<0<\lim_{S\downarrow0}\hat{\mathcal{R}}^{\prime}(S).
\]
From the monotonicity of $\hat{\mathcal{R}}^{\prime},$ there is a
unique maxima for $0<S<S(0)$. Solving $\hat{\mathcal{R}}^{\prime}(S)=0$,
we obtain $\tilde{S}$ given as in (\ref{eq:Stilde}). It is now straightforward
to obtain the statements (a) and (b).
\end{proof}
In Figure \ref{fig:funcf}, we plot the graph of the function $\hat{\mathcal{R}}(S)-1$
for $0\leq S\leq S(0)$, where $\mathcal{R}_{0}\leq1$ and $\sigma>1$.
Parameters are fixed so that (\ref{eq:fdiff0_2}) and $\hat{\mathcal{R}}(\tilde{S})>1$
hold. 

From Proposition \ref{prop:f_sigma>1} and Lemma \ref{lem:limitexists}
in Appendix \ref{sec:AppA}, we first obtain the result for the extinction
of the disease.
\begin{thm}
Let us assume that $\mathcal{R}_{0}\leq1$ and $\sigma>1$ holds.
If either that 
\begin{enumerate}
\item (\ref{eq:fdiff0_1}) holds, or 
\item (\ref{eq:fdiff0_2}) and $\hat{\mathcal{R}}(\tilde{S})\leq1$ hold, 
\end{enumerate}
then $I$ is monotonically decreasing for $t\geq0$. It follows that
$\lim_{t\to\infty}I(t)=0$. 
\end{thm}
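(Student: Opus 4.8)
The plan is to read the sign of $I'$ directly off the analysis of $\hat{\mathcal{R}}$ already carried out. Combining (\ref{eq:It2}) with $R(t)=r(S(t))$ from Lemma~\ref{lem:RbyS} (cf. (\ref{eq:r(S)})) gives
\[
I'(t)=\gamma I(t)\bigl(\hat{\mathcal{R}}(S(t))-1\bigr),\qquad t\ge 0 .
\]
First I would note that, by (\ref{eq:SIRB1}), $S(t)=S(0)\exp\!\bigl(-\beta\int_{0}^{t}I(s)\,ds\bigr)$, so $0<S(t)\le S(0)$ for all $t\ge 0$ and $S$ is (strictly) decreasing; in particular $S(t)$ never leaves the interval $[0,S(0)]$ on which $\hat{\mathcal{R}}$ was studied. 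Next, under hypothesis~(1), i.e. (\ref{eq:fdiff0_1}), part~(1) of Proposition~\ref{prop:f_sigma>1} gives $\hat{\mathcal{R}}(S)\le 1$ for $0\le S\le S(0)$; under hypothesis~(2), i.e. (\ref{eq:fdiff0_2}) together with $\hat{\mathcal{R}}(\tilde S)\le 1$, part~(2)(b) of the same proposition gives the identical conclusion. Hence $\hat{\mathcal{R}}(S(t))-1\le 0$ for every $t\ge 0$, so $I'(t)\le 0$ and $I$ is monotonically decreasing on $[0,\infty)$.

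To obtain the limit, I would first observe that $I$ stays positive (by the displayed identity, $I(t)=I(0)\exp\bigl(\gamma\int_{0}^{t}(\hat{\mathcal{R}}(S(s))-1)\,ds\bigr)>0$), so, being non-increasing and bounded below by $0$, it has a limit $I_\infty\ge 0$; likewise $S_\infty:=\lim_{t\to\infty}S(t)\ge 0$ exists. This is exactly the setting of Lemma~\ref{lem:limitexists} in Appendix~\ref{sec:AppA}, which yields $I_\infty=0$. If one prefers a self-contained argument: integrating (\ref{eq:SIRB1}) gives $S(0)-S_\infty=\beta\int_{0}^{\infty}S(t)I(t)\,dt<\infty$; if $S_\infty>0$ this forces $\int_{0}^{\infty}I(t)\,dt<\infty$, whence (monotonicity of $I$) $I_\infty=0$, while if $S_\infty=0$ then, because $\sigma>1$, $r(S(t))\to r(0)=\tfrac{\gamma}{\sigma\beta}$ by (\ref{eq:R(S)-1}), so $\hat{\mathcal{R}}(S(t))\to\hat{\mathcal{R}}(0)=\alpha<1$; choosing $t_{0}$ with $\hat{\mathcal{R}}(S(t))\le\tfrac{1+\alpha}{2}$ for $t\ge t_{0}$ yields $I'(t)\le-\gamma\tfrac{1-\alpha}{2}I(t)$, hence $I(t)\to 0$ exponentially. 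Either way $\lim_{t\to\infty}I(t)=0$.

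I do not expect a genuine obstacle: the monotonicity of $I$ is an immediate corollary of the sign information on $\hat{\mathcal{R}}$ established in Proposition~\ref{prop:f_sigma>1}, and the only delicate point, the passage to the limit, is already packaged into Lemma~\ref{lem:limitexists} (or dispatched by the short dichotomy above). The one thing to be careful about is checking that $S(t)$ remains in $[0,S(0)]$ for all $t\ge 0$ so that Proposition~\ref{prop:f_sigma>1} applies globally in time, which is immediate from $S'=-\beta S I\le 0$.
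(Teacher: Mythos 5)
Your proof is correct and follows the same route as the paper: the paper derives this theorem directly by combining Proposition \ref{prop:f_sigma>1} (which gives $\hat{\mathcal{R}}(S)\leq1$ on $[0,S(0)]$ under either hypothesis) with the identity (\ref{eq:It2}) and Lemma \ref{lem:limitexists} for the limit. You have simply written out the details the paper leaves implicit, plus an optional self-contained argument for $\lim_{t\to\infty}I(t)=0$ that is sound but not needed.
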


Now it is assumed that (\ref{eq:ass}) holds. If 
\[
\hat{\mathcal{R}}(\tilde{S})>1
\]
holds, where $\tilde{S}$ is a root of 
\[
\hat{\mathcal{R}}^{\prime}(S)=0
\]
and the existence is ensured by the condition (\ref{eq:fdiff0_2}),
then $I(t)$ may attain a minimum and a maxima (see Figure \ref{fig:funcf}).
This implies that even if $\mathcal{R}_{0}\le1$, the epidemic curve
may grow for a certain time interval, which we call \textit{delayed
epidemic.}

To determine if the delayed epidemic indeed occurs, we evaluate the
minimum of $I(t)$ using the following expression for $I$ derived
in Proposition \ref{proposition:IB_Exp} in Appendix \ref{sec:AppA}
\begin{equation}
I(t)=I(0)+p(S(t))-\alpha q(R(t)),\ t\geq0,\label{eq:I_sol}
\end{equation}
where

\begin{align*}
p(S) & :=\left(S(0)-S\right)+\frac{\gamma}{\beta}\ln\left(\frac{S}{S(0)}\right),\\
q(R) & :=\frac{\gamma}{\sigma\beta}\ln\left(\frac{1-\frac{\sigma\beta}{\gamma}R}{1-\frac{\sigma\beta}{\gamma}R(0)}\right)+\left(R-R(0)\right).
\end{align*}
Substituting (\ref{eq:r(S)}) into (\ref{eq:I_sol}), $I(t)$ can
be expressed in terms of $S(t)$ as follows 
\[
I(t)=I(0)+p(S(t))-\alpha q(r(S(t)),\ t\geq0.
\]
See also Figure \ref{fig:funcf} (B) for the phase portrait in the
$(I,S)$-plane. 
\begin{thm}
\label{theorem:delay}Let us assume that $\mathcal{R}_{0}\leq1$ and
$\sigma>1$ holds. Furthermore, assume that (\ref{eq:fdiff0_2}) and
\[
\hat{\mathcal{R}}(\tilde{S})>1
\]
hold. 
\begin{enumerate}
\item If $I(0)+p(\hat{S}_{2})-\alpha q(r(\hat{S}_{2}))\leq0$, then $I$
is monotonically decreasing for $t\geq0$. 
\item If $I(0)+p(\hat{S}_{2})-\alpha q(r(\hat{S}_{2}))>0$, then there is
an interval $\left[t_{1},t_{2}\right]$ such that $I$ increases for
$t_{1}\leq t\leq t_{2}$ and decreases for $0\le t\leq t_{1}$ and
$t_{2}\leq t$. 
\end{enumerate}
It follows that $\lim_{t\to\infty}I(t)=0$.

\end{thm}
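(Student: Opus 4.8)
The plan is to read off the behaviour of $I$ directly from the explicit formula $I(t)=\Phi(S(t))$, where $\Phi(S):=I(0)+p(S)-\alpha q(r(S))$ for $0\le S\le S(0)$ (combine (\ref{eq:I_sol}) with (\ref{eq:r(S)})), together with the sign information for $\hat{\mathcal R}(S)-1$ furnished by Proposition \ref{prop:f_sigma>1}(2)(a). Writing $I$ and $S$ as exponentials from (\ref{eq:SIRB2}) and (\ref{eq:SIRB1}) shows $I(t)>0$ and $S(t)>0$ for all $t\ge0$, hence $S'(t)=-\beta S(t)I(t)<0$, so $S$ is strictly decreasing and $S(t)\downarrow S_\infty\ge 0$. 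By (\ref{eq:IS}) with $R=r(S)$ one has $\Phi'(S)=\frac{dI}{dS}=-\frac{\gamma}{\beta S}\bigl(\hat{\mathcal R}(S)-1\bigr)$, so Proposition \ref{prop:f_sigma>1}(2)(a) gives that $\Phi$ is strictly increasing on $(0,\hat S_1)$, strictly decreasing on $(\hat S_1,\hat S_2)$, strictly increasing on $(\hat S_2,S(0)]$, with $\Phi(S(0))=I(0)$; thus $\hat S_1$ is a strict local maximum and $\hat S_2$ a strict local minimum of $\Phi$. Note that the threshold quantity in the two cases of the theorem is exactly $\Phi(\hat S_2)=I(0)+p(\hat S_2)-\alpha q(r(\hat S_2))$.

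For part (1), assume $\Phi(\hat S_2)\le0$. Since $I(t)=\Phi(S(t))>0$ and the range of $S(\cdot)$ is $(S_\infty,S(0)]$, we have $\Phi>0$ on $(S_\infty,S(0)]$; as $\hat S_2<S(0)$ and $\Phi(\hat S_2)\le0$ this forces $\hat S_2\le S_\infty$, i.e. $S(t)>\hat S_2$ for every $t\ge0$. On $(\hat S_2,S(0)]$ the function $\Phi$ is strictly increasing and $S$ is strictly decreasing, so $I=\Phi\circ S$ is strictly decreasing on $[0,\infty)$.

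For part (2), assume $\Phi(\hat S_2)>0$. The key step is to show $S_\infty<\hat S_1$. First, $S_\infty>0$: otherwise $I(t)=\Phi(S(t))\to\Phi(0^+)=-\infty$, since in $\Phi(S)=I(0)+(S(0)-S)-\alpha\bigl(r(S)-R(0)\bigr)+(1-\alpha)\frac{\gamma}{\beta}\ln\!\frac{S}{S(0)}$ the last term dominates and $1-\alpha>0$ by (\ref{eq:ass_alpha}); this contradicts $I\ge0$. Hence $\Phi$ is continuous at $S_\infty$ and $\Phi(S_\infty)=\lim_{t\to\infty}I(t)=0$ by Lemma \ref{lem:limitexists}. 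Now if $S_\infty\ge\hat S_1$, then $S_\infty$ lies either in $[\hat S_1,\hat S_2]$, where $\Phi$ decreases, or in $[\hat S_2,S(0)]$, where $\Phi$ increases; in both intervals monotonicity gives $\Phi(S_\infty)\ge\Phi(\hat S_2)>0$, a contradiction. Therefore $S(0)>\hat S_2>\hat S_1>S_\infty$, and by the intermediate value theorem there are unique $0<t_1<t_2$ with $S(t_1)=\hat S_2$ and $S(t_2)=\hat S_1$. On $[0,t_1]$ we have $S(t)\in[\hat S_2,S(0)]$ where $\Phi$ is increasing, so $I$ decreases; on $[t_1,t_2]$ we have $S(t)\in[\hat S_1,\hat S_2]$ where $\Phi$ is decreasing, so $I$ increases; on $[t_2,\infty)$ we have $S(t)\in(S_\infty,\hat S_1]$ where $\Phi$ is increasing, so $I$ decreases. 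In both parts $\lim_{t\to\infty}I(t)=0$ follows from Lemma \ref{lem:limitexists}.

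The main obstacle is the claim $S_\infty<\hat S_1$ in part (2): one must rule out that the trajectory stalls with $S_\infty\in[\hat S_1,\hat S_2)$, in which case $I$ would keep climbing toward a positive limit. This is precisely where $\lim_{t\to\infty}I(t)=0$ is used, and one also has to dispose of the degenerate possibility $S_\infty=0$, which is excluded by the blow-up of $\Phi$ at the origin noted above. Everything else is a routine translation of the graph of $\hat{\mathcal R}$ in Proposition \ref{prop:f_sigma>1} (cf. Figure \ref{fig:funcf}) into monotonicity of $I$ along the strictly decreasing curve $S(t)$.
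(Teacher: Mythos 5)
Your proposal is correct and follows essentially the same route as the paper: both read the monotonicity of $I$ off the explicit expression for $I$ as a function of the strictly decreasing $S(t)$, using the sign pattern of $\hat{\mathcal{R}}(S)-1$ from Proposition \ref{prop:f_sigma>1}, positivity of $I$ for case (1), and Lemma \ref{lem:limitexists} for the limit. The one substantive difference is that you explicitly justify the existence of $t_{1}$ and $t_{2}$ (by showing $S_{\infty}>0$ via the logarithmic blow-up of the phase-plane expression at $S=0$, and then $S_{\infty}<\hat{S}_{1}$ from $\Phi(S_{\infty})=0<\Phi(\hat{S}_{2})$), a step the paper's proof asserts without argument; this is a genuine, worthwhile tightening rather than a different method.
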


\begin{proof}
One sees that $I$ has a local maxima and minima with respect to $t$
and $S$, where $\hat{\mathcal{R}}(S)=1$ holds (see \ref{eq:IS}
and \ref{eq:It2}). $I$ has a local minima at $S=\hat{S}_{2}\in\left(\tilde{S},S(0)\right)$
and $I$ is increasing for $\hat{S}_{2}\leq S\leq S(0)$ (see Figure
\ref{fig:funcf} (B)). Noting that $I(t)>0$ for $t\geq0$ and that
$S$ is a decreasing function with respect to $t$, $I(0)+p(\hat{S}_{2})-\alpha q(r(\hat{S}_{2}))\leq0$
implies that $I$ is monotonically decreasing for $t\geq0$. On the
other hand, if $I(0)+p(\hat{S}_{2})-\alpha q(r(\hat{S}_{2}))>0$ then
$I$ is monotonically increasing for $S<\hat{S}_{1}$, decreasing
for $\hat{S}_{1}<S<\hat{S}_{2}$ and then increasing for $\hat{S}_{2}<S$.
There exist $t_{1}$ and $t_{2}$ such that $S(t_{1})=\hat{S}_{2}$
and $S(t_{2})=\hat{S}_{1}$. Thus we obtain the conclusion. From Lemma
\ref{lem:limitexists} in Appendix \ref{sec:AppA} it follows that
$\lim_{t\to\infty}I(t)=0$. 
\end{proof}
Thus the model has three different transmission dynamics: no epidemic,
normal epidemic and delayed epidemic as illustrated in Figure \ref{fig:epicurve}.
Figure \ref{fig:paramreg} shows parameter regions for the three different
disease transmission dynamics. The region for the delayed epidemic
become larger with respect to the initial condition of $I$ and the
susceptibility $\sigma$. Since the initial condition of $I$ is involved
in the condition of Theorem \ref{theorem:delay}, the initial condition
qualitatively changes the epidemic curve, see Figure \ref{fig:delay_init}:
delayed epidemic is induced by a large initial condition. 
\begin{figure*}
\begin{centering}
\includegraphics[scale=0.45]{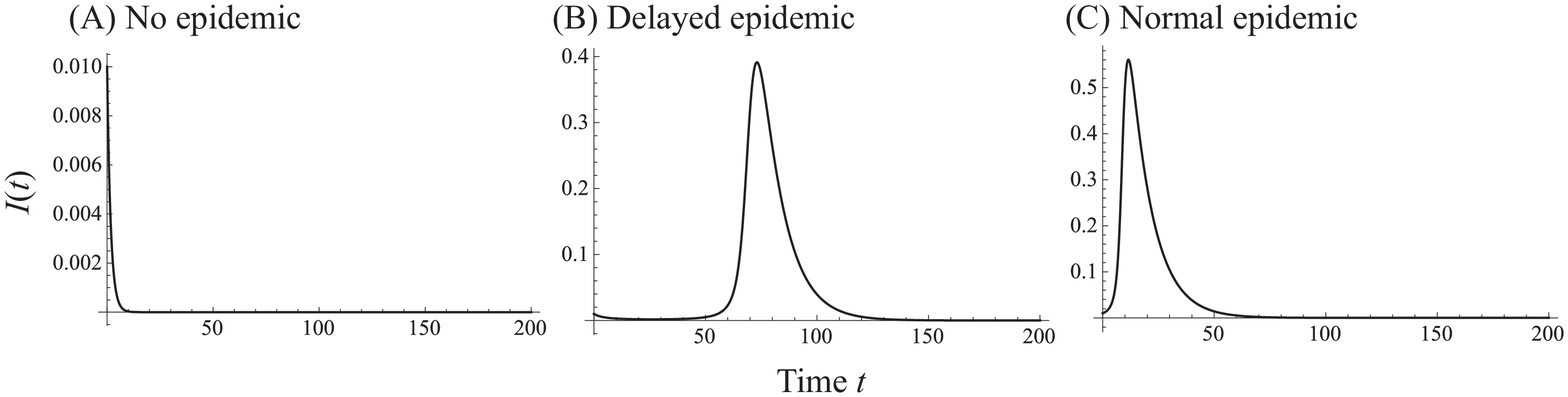}
\par\end{centering}
\centering{}\caption{\label{fig:epicurve}Examples of three types of epidemic curve. (A)
shows no epidemic case, (B) shows delayed epidemic, and (C) shows
normal epidemic, respectively. Parameters were set as $\mathcal{R}_{0}=0.4$
for (A), $0.8$ for (B), and $1.2$ for (C), other parameter values
are identical between (A), (B) and (C): $\alpha=0.9,\:\sigma=5,\ S(0)=0.99,\:I(0)=0.01$,
and $R(0)=B(0)=0$.}
\end{figure*}

\begin{figure*}
\begin{centering}
\includegraphics[scale=0.27]{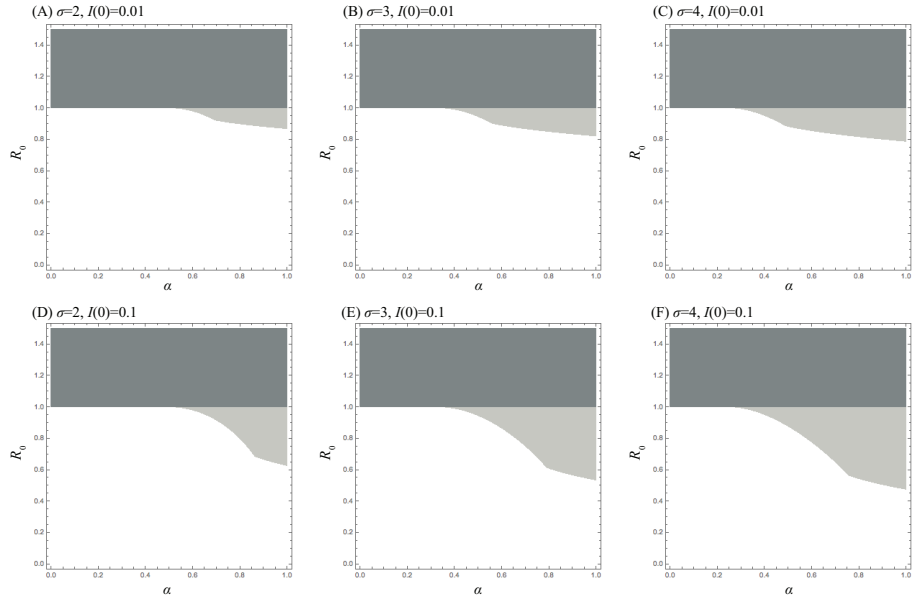}
\par\end{centering}
\centering{}\caption{\label{fig:paramreg}The dependency of epidemic type on $\mathcal{R}_{0}$
and $\alpha$ for several $\sigma$ and $I\left(0\right)$. White
are denotes ``no epidemic'', light gray area denotes ``delayed
epidemic'', and gray area denotes ``normal epidemic'', respectively.}
\end{figure*}

\begin{figure*}
\begin{centering}
\includegraphics[scale=0.55]{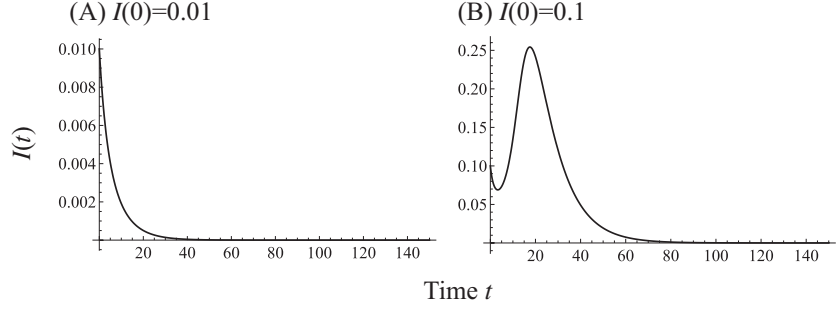}
\par\end{centering}
\centering{}\caption{\label{fig:delay_init}The initial condition for $I$ changes the
disease transmission dynamics. Here parameters are chosen as $\beta=0.8,\ \gamma=1,\ \alpha=0.9,\ \sigma=3$.
$R(0)=B(0)=0$. (A) shows no epidemic for $I(0)=0.1$ while (B) shows
delayed epidemic for $I(0)=0.01$.}
\end{figure*}

Consider a special case that $R\left(0\right)\to0$. The basic reproduction
number is given as 
\[
\mathcal{R}_{0}=\frac{\beta}{\gamma}S(0).
\]
The conditions (\ref{eq:fdiff0_2}) becomes 
\[
\mathcal{R}_{0}<\sigma\alpha
\]
and $\tilde{S}=\left(\frac{\mathcal{R}_{0}}{\sigma\alpha}\right)^{\frac{1}{\sigma-1}}S(0).$
If $\mathcal{\hat{R}}(\tilde{S})>1$ holds, then the delayed epidemic
may occur. 

\section{Final epidemic size }

Let 
\[
\left(S(\infty),I(\infty),R(\infty),B(\infty)\right)=\lim_{t\to\infty}\left(S(t),I(t),R(t),B(t)\right).
\]
It follows that $I(\infty)=0$. From the relations (\ref{eq:IbySR}),
(\ref{eq:RbyS2}) and (\ref{eq:BbySR}), one sees that $(S(\infty),R(\infty),B(\infty))$
satisfy the following equations

\begin{align}
0 & =I(0)+p(S\left(\infty\right))-\alpha q(R\left(\infty\right)),\label{eq:Sf}\\
R(\infty) & =r(S(\infty)),\label{eq:Rf}\\
B(\infty) & =B(0)-\left(1-\alpha\right)q(R(\infty)).\label{eq:Bf}
\end{align}
The final epidemic size is given by $R(\infty)+B(\infty)$, the number
of individuals who infected at least once. From (\ref{eq:Sf}) and
(\ref{eq:Rf}) we get the following equation
\begin{equation}
0=I(0)+p(S\left(\infty\right))-\alpha q(r\left(S(\infty)\right)).\label{eq:Sinf}
\end{equation}
In Figure \ref{fig:FinalRB}, we plot $R(\infty)+B(\infty)=1-S(\infty),\ R(\infty)$
and $B(\infty)$ with respect to $\mathcal{R}_{0}$.

Numerically we observe that $R(\infty)$ is not monotone with respect
to $\mathcal{R}_{0}$. Small $\mathcal{R}_{0}$ allows the increase
of $R(\infty)$, on the other hand, does not contribute to the increase
of $B(t)$, the outbreak ends before the transition from $R(t)$ to
$B(t)$ via $I(t)$ occurs among most $R(t)$. Increase of $\mathcal{R}_{0}$
contributes the transition from $R(t)$ to $B(t)$, consequently,
$R(\infty)$ decreases. Despite of non-monotnic relation of $R(\infty)$
with respect to $\mathcal{R}_{0}$, $R(\infty)+B(\infty)$ is likely
to increase monotonically with the increase of $\mathcal{R}_{0}$
as shown in Figure \ref{fig:FinalRB}.

When $\alpha=0$ we obtain the standard SIR setting. Letting $I(0)\to0$
and $S(0)\to1$, the basic reproduction number is given as $\mathcal{R}_{0}=b$.
In this case, from (\ref{eq:Sf}), we obtain the well known final
size relation 
\[
0=\left(1-S\left(\infty\right)\right)+\frac{1}{\mathcal{R}_{0}}\ln\left(S\left(\infty\right)\right),
\]
see e.g. \citep{Diekmann:2012,Inaba:2017}.

\begin{figure}
\begin{centering}
\includegraphics[scale=0.55]{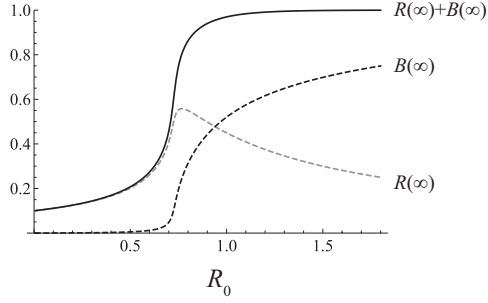}
\par\end{centering}
\centering{}\caption{\label{fig:FinalRB}Final epidemic size with respect to $\mathcal{R}_{0}$.
Initial conditions are fixed as $S(0)=0.9,\ I(0)=0.1,\ R(0)=B(0)=0$.
Parameters are fixed as $\sigma=2,\ \alpha=0.8$.}
\end{figure}

\section{Discussion}

In this paper we study a disease transmission dynamics model incorporating
natural-boosting immunity. Our modelling approach describing boosting
immunity covers not only the standard transmission dynamics but also
an interesting dynamics, delayed epidemic. Delayed epidemic shows
negative slope at the initial phase of epidemic, thus, the estimation
of $\mathcal{R}_{0}$ using the initial slope of epidemic is difficult
to capture the actual epidemic coming later. We derive the condition
for a delayed epidemic through deriving the analytic transient solution
of $I(t)$.

Delayed epidemic, which is illustrated in Figures \ref{fig:epicurve}
and \ref{fig:delay_init}, occurs due to the enhancement of susceptibility
of the recovered population (i.e., $\sigma>1$). For example, antibody
dependent enhancement can enhance the viral replication within the
host body, consequently, the host susceptibility can be enhanced at
the time of reinfection \citep{Whitehead:2007}. In Theorem \ref{theorem:delay}
we formulate a condition for the delayed epidemic. One of the necessary
condition for the delayed epidemic is (\ref{eq:fdiff0_2}) in Proposition
\ref{prop:f_sigma>1}. The condition (\ref{eq:fdiff0_2}) is necessary
for increasing of the epidemic curve and is related to increasing
of the effective susceptible population, which is defined as

\[
L(t):=S(t)+\alpha\sigma R(t).
\]
Since it holds that 
\begin{align*}
L^{\prime}(0) & =S^{\prime}(0)+\alpha\sigma R^{\prime}(0)\\
 & =\gamma I(0)\left(-bS(0)+\alpha\sigma\left(1-b\sigma R(0)\right)\right),
\end{align*}
one can see that 
\begin{align*}
L^{\prime}(0)>0 & \Leftrightarrow bS(0)<\sigma\alpha\left(1-b\sigma R(0)\right),
\end{align*}
where $b=\frac{\beta}{\gamma}$. Therefore, increasing of the effective
susceptible population at the initial time is necessary for the delayed
epidemic and may induce the delayed epidemic even if $\mathcal{R}_{0}\leq1$
holds.

We remark that $\mathcal{R}_{0}$ cannot measure the outbreak potential
of ``delayed epidemic''. In principe, $\mathcal{R}_{0}$ is derived
based on the linearized system at the the initial disease transmission
dynamics. The linearized system at the the initial phase does not
provide enough information to predict the delayed epidemic. Similar
phenomena can be observed in epidemic models that show backward bifurcation
of the endemic equilibrium \citep{Arino:2003,Brauer:2004,Greenhalgh:2000,Inaba:2017,Kribs:2000}.
In those studies it is shown that there is a stable endemic equilibrium
even if the basic reproduction number is less than unity. Differently
from those models, the short-term disease transmission dynamics model
have many equilibria, which are associated to the zero eigenvalue.
Our study illustrates that, in the short-term disease transmission
dynamics, the outbreak potential shall be carefully examined, using
the transient solution.

We also observed the reinfection threshold like behavior \citep{Gomes:2004,Katriel:2010,Inaba:2016}
(see Figure \ref{fig:FinalRB}). In the extreme case that the initial
population is composed of only $I(0)$ and $R(0)$ ($S(0)=B(0)=0$),
$\mathcal{R}_{0}=\frac{\beta}{\gamma}\alpha\sigma R\left(0\right)=1$
is shown to be the threshold for the outbreak, which amounts to the
concept of the reinfection threshold. Differently from the models
studied in \citep{Gomes:2004,Katriel:2010,Inaba:2016}, our model
has the full protection compartment $B$. We here found that reinfection
threshold is not a sufficient criterion for the outbreak if the initial
population is composed of $S(0),I(0),R(0)$ and $B(0)$ (gray area
shown in Figure \ref{fig:FinalRB}).

$\mathcal{R}_{0}$ can be estimated from the final epidemic size.
It should be noted that $\mathcal{R}_{0}$ can be overestimated if
the model neglects the boosting immunity. Figure \ref{fig:over} shows
the estimated $\mathcal{R}_{0}$ using a fixed final epidemic size
$=0.5$ with varied $\alpha$ and $\sigma$, our model is equivalent
with a standard SIR model when $\alpha=0$ or $\sigma=0$. If boosting
and waning immunity are introduced, $\alpha>0$ or $\sigma>0$, the
estimated $\mathcal{R}_{0}$ is always lower than it using the standard
epidemic model, $\alpha=0$ or $\sigma=0$. To estimate the precise
$\mathcal{R}_{0}$ from the final epidemic size, the appropriate modelling
with respect to boosting and waning immunity is required.

\begin{figure}
\includegraphics[scale=0.45]{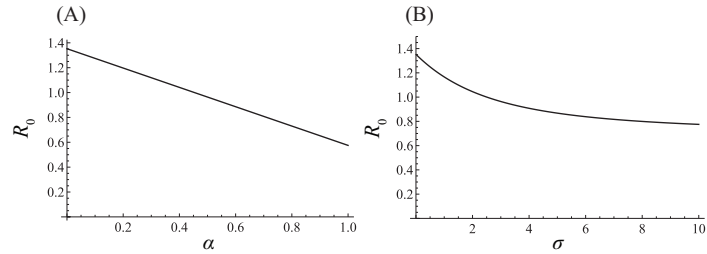}

\caption{\label{fig:over}$\mathcal{R}_{0}$ derived from a given final epidemic
size, $R(\infty)+B(\infty)=0.5$ with varied $\alpha$ and $\sigma$.
When $\alpha=0$ or $\sigma=0$ the boosting and waning immunity do
not occur (the standard SIR model). The initial conditions are fixed
as $S(0)=0.99,\ I(0)=0.01,\ R(0)=B(0)=0$. Parameters are fixed as
$\sigma=3$ for (A) and $\alpha=0.5$ for (B).}
\end{figure}

The time series data of the reported $I(t)$ is used to estimate the
epidemiological parameters. However, the reported $I(t)$ can be biased
by reporting biases and asymptomatic cases. Serological surveillance
can collect the data which is less likely to suffer from such biases.
Our analytical results allows the real-time estimation of $I(t)$
using the field data obtained by sero-surveillance. Since $I(t)$
can be implicitly determined from $R(t)+B(t)$ in our model. $1-\ensuremath{(R(t)+B(t))}=S(t)+I(t)$
and $I(t)$ is a function of $S(t)$, then $I(t)$ can be derived
from $R(t)+B(t)$. If $R(t)+B(t)$ is collected by serological study,
$I(t)$ can be estimated.

Our mathematical model describing natural-boosting immunity has a
limitation; we described step-wise level of boosting immunity, i.e.,
$R$ has susceptibility $\sigma$ to the infectious disease while
$B$ has a complete protection against reinfection. This setting is
suitable for the infectious diseases such that the multiple infections
can establish drastic increase of the immunity level. On the other
hand, to describe gradual change of the immunity level resulted from
boosting and waning immunity, the several classes of $R$ with varied
immune protection level are required.

\subsection*{Acknowledgement}

The first author was supported by JSPS Grant-in-Aid for Young Scientists
(B) 16K20976 of Japan Society for the Promotion of Science. The second
author was supported by PRESTO, Japan Science and Technology Agency,
grant number JPMJPR15E1, and JSPS Grant-in-Aid for Young Scientists
(B) 15K19217 of Japan Society for the Promotion of Science.

\appendix
%dummy comment inserted by tex2lyx to ensure that this paragraph is not empty

\section{Disease transmission dynamics\label{sec:AppA}}

For simplicity, we write $b$ for $\frac{\beta}{\gamma}$.
\begin{lem}
\label{lem:limitexists}There exist $\lim_{t\to\infty}S(t),\ \lim_{t\to\infty}I(t),\ \lim_{t\to\infty}R(t)$
and $\lim_{t\to\infty}B(t)$. It holds that 
\begin{equation}
\lim_{t\to\infty}I(t)=0.\label{eq:limI0}
\end{equation}
\end{lem}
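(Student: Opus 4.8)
The plan is to extract the four limits from monotonicity considerations and then to rule out $I(\infty)>0$ by a monotonicity argument applied to the compartment $B$. To set up, I would first note that the simplex $\{(S,I,R,B):S,I,R,B\ge 0,\ S+I+R+B=1\}$ is forward invariant: adding (\ref{eq:SIRB1})--(\ref{eq:SIRB4}) gives $(S+I+R+B)'\equiv 0$, so $S(t)+I(t)+R(t)+B(t)=1$ for all $t\ge 0$, and nonnegativity of the components follows by standard arguments (for instance $S(t)=S(0)\exp(-\beta\int_0^t I)>0$, $I(t)=I(0)\exp(\int_0^t(\beta S-\gamma+\beta\sigma\alpha R))>0$, while $R'\ge 0$ on $\{R=0\}$ and $B'\ge 0$); in particular the solution stays bounded and is defined for all $t\ge 0$.

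\emph{Existence of the limits.} From (\ref{eq:SIRB1}), $S'=-\beta SI\le 0$, so $S$ is non-increasing and bounded below by $0$; hence $S(\infty):=\lim_{t\to\infty}S(t)$ exists. By Lemma~\ref{lem:RbyS}, $R(t)=r(S(t))$ with $r$ continuous, so $R(\infty):=r(S(\infty))=\lim_{t\to\infty}R(t)$ exists. By (\ref{eq:SIRB4}), $B'=\beta\sigma(1-\alpha)IR\ge 0$, so $B$ is non-decreasing and bounded above by $1$; hence $B(\infty):=\lim_{t\to\infty}B(t)$ exists. Finally, since $I(t)=1-S(t)-R(t)-B(t)$, the limit $I(\infty):=1-S(\infty)-R(\infty)-B(\infty)=\lim_{t\to\infty}I(t)$ exists as well.

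\emph{Vanishing of $I$.} Suppose, for contradiction, that $I(\infty)=c>0$, and fix $T$ with $I(t)\ge c/2$ for $t\ge T$. Since $S$ is non-increasing, $S(t)\ge S(\infty)$ for all $t$, so (\ref{eq:SIRB1}) gives $S'(t)\le-\tfrac12\beta c\,S(\infty)$ for $t\ge T$. If $S(\infty)>0$ this forces $S(t)\to-\infty$, which is impossible; hence $S(\infty)=0$. By continuity of $r$ together with $\sigma>0$, the explicit formula (\ref{eq:R(S)-1}) then gives $R(\infty)=r(0)=\gamma/(\sigma\beta)>0$, so there is $T'\ge T$ with $R(t)\ge\gamma/(2\sigma\beta)$ for $t\ge T'$. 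Hence, using (\ref{eq:SIRB4}) and $1-\alpha>0$ from (\ref{eq:ass_alpha}),
\[
B'(t)=\beta\sigma(1-\alpha)I(t)R(t)\ \ge\ \tfrac14(1-\alpha)\gamma c\ >\ 0\qquad\text{for }t\ge T',
\]
so $B(t)\to\infty$, contradicting $B(t)\le 1$. Therefore $I(\infty)=0$.

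\emph{Where the work is.} Establishing the four limits is routine; the only wrinkle is that $R$ need not be monotone when $\sigma>1$, so its limit must be supplied by Lemma~\ref{lem:RbyS} rather than by monotonicity. The substantive step is the exclusion of $I(\infty)>0$, and the point is that the contradiction must be routed through the unique strictly monotone compartment $B$: once $S$ --- and hence, via (\ref{eq:R(S)-1}), $R$ --- has settled to a positive limit, a non-vanishing $I$ keeps the boosting flux $\beta\sigma(1-\alpha)IR$ bounded away from $0$, which a bounded $B$ cannot accommodate.
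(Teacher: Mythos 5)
Your proof is correct, but the mechanism by which you force $I(\infty)=0$ is genuinely different from the paper's. The paper observes the exact identity $S'(t)+I'(t)+\alpha R'(t)=-\gamma(1-\alpha)I(t)$ (the combination is chosen precisely so that the nonlinear boosting terms cancel), integrates it, and notes that the left-hand side converges while the right-hand side's integral would diverge to $-\infty$ if $I(\infty)>0$; this is a one-line Lyapunov-type argument. You instead run a two-stage contradiction: a persistent $I$ first forces $S(\infty)=0$ via $S'=-\beta SI$, then $R(\infty)=r(0)=\gamma/(\sigma\beta)>0$ via the phase-plane formula (\ref{eq:R(S)-1}), and finally the boosting flux $\beta\sigma(1-\alpha)IR$ into $B$ stays bounded away from zero, which the bound $B\le 1$ cannot absorb. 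Both arguments use $\alpha<1$ in an essential way --- the paper through the factor $1-\alpha$ multiplying $I$, you through the positivity of the flux into $B$ --- and both are valid; the paper's is shorter, yours is more mechanistic and makes visible which compartment actually absorbs the mass. One small correction to your closing remark: $R$ is in fact always monotone, since $R'=I(\gamma-\beta\sigma R)$ and the sign of $\gamma-\beta\sigma R$ is preserved along trajectories ($R\equiv\gamma/(\beta\sigma)$ solves the linear equation for $R$ given $I(\cdot)$, so no solution crosses it --- equivalently, (\ref{eq:RbyS1}) shows $1-\sigma b R(t)$ keeps the sign of $1-\sigma bR(0)$); this is how the paper obtains $\lim_{t\to\infty}R(t)$, and your detour through $R=r(S)$ and continuity, while unnecessary, is perfectly sound.
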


\begin{proof}
One easily sees that $S,\ R$ and $B$ are respectively monotone bounded
functions. Specifically $S$ is a monotone decreasing function, while
$B$ is a monotone increasing function. Therefore, $S,\ R$ and $B$
tend to some constants. Since $S(t)+I(t)+R(t)+B(t)=1$ holds for $t\geq0$,
$\lim_{t\to\infty}I(t)$ also exists. We now claim that (\ref{eq:limI0})
holds. From (\ref{eq:SIRB1}), (\ref{eq:SIRB2}) and (\ref{eq:SIRB3})
one has 
\[
S^{\prime}(t)+I^{\prime}(t)+\alpha R^{\prime}(t)=-\gamma\left(1-\alpha\right)I(t).
\]
Suppose that $\lim_{t\to\infty}I(t)>0$. Integrating the above equation,
we derive a contradiction. Hence (\ref{eq:limI0}) holds. 
\end{proof}
We now introduce the following lemma.
\begin{lem}
\label{lemma:intRS}One has 
\begin{align}
\int_{S(0)}^{S(t)}\frac{\sigma r(S)}{S}dS= & \frac{1}{\sigma b}\ln\left(\frac{1-\sigma bR(t)}{1-\sigma bR(0)}\right)+\left(R(t)-R(0)\right),\ t\geq0.\label{eq:intRS}
\end{align}
\end{lem}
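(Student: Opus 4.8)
The plan is to establish the identity in Lemma~\ref{lemma:intRS} by a direct substitution using the explicit formula for $r(S)$ from~\eqref{eq:R(S)-1}, together with the change-of-variables relation~\eqref{eq:RbyS1} between $S(t)/S(0)$ and $R(t)$. Recall from~\eqref{eq:RbyS1} that $\left(S/S(0)\right)^{\sigma} = \frac{1-\sigma bR}{1-\sigma bR(0)}$ along the solution, so the integral over $S$ can be converted into an integral over $R$ by means of~\eqref{eq:rs}, namely $\frac{dS}{S} = -\frac{b}{1-\sigma bR}\,dR$. Under this substitution, the integrand $\frac{\sigma r(S)}{S}\,dS$ becomes $\sigma R \cdot \left(-\frac{b}{1-\sigma bR}\right)dR = -\frac{\sigma bR}{1-\sigma bR}\,dR$, and the limits transform from $S(0)\to S(t)$ into $R(0)\to R(t)$.

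The key computation is then the elementary integral
\[
\int_{R(0)}^{R(t)} \left(-\frac{\sigma bR}{1-\sigma bR}\right)dR.
\]
Writing $-\frac{\sigma bR}{1-\sigma bR} = 1 - \frac{1}{1-\sigma bR}$, this splits into $\int_{R(0)}^{R(t)} 1\,dR$ and $-\int_{R(0)}^{R(t)} \frac{dR}{1-\sigma bR}$. The first term gives $R(t)-R(0)$, and the second gives $\frac{1}{\sigma b}\ln\!\left(\frac{1-\sigma bR(t)}{1-\sigma bR(0)}\right)$, which together yield exactly the right-hand side of~\eqref{eq:intRS}. Alternatively, one can simply verify the identity by differentiating both sides with respect to $t$: the left side has derivative $\frac{\sigma r(S(t))}{S(t)}S'(t)$ by the fundamental theorem of calculus and the chain rule, while the right side differentiates (using $R(t)=r(S(t))$ and $R'(t)=r'(S(t))S'(t)$) to $\left(\frac{-1}{1-\sigma bR(t)}+1\right)R'(t) = \frac{\sigma bR(t)}{1-\sigma bR(t)}R'(t)$, and one checks these agree using~\eqref{eq:rs}, i.e. $R'(t) = -\frac{1-\sigma bR(t)}{bS(t)}S'(t)$; both sides reduce to $-\frac{\sigma r(S(t))}{S(t)}\cdot\frac{1-\sigma bR(0)}{1-\sigma bR(0)}\cdots$ — more cleanly, both equal $-\frac{\sigma R(t)}{S(t)}S'(t)\cdot(-1)$ after cancellation. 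Finally, both sides vanish at $t=0$, so equality for all $t\geq 0$ follows.

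I expect no serious obstacle here; the only point requiring a little care is the degenerate case $1-\sigma bR(0)=0$, where~\eqref{eq:rs} and the separation-of-variables step are not literally valid. As in the proof of Lemma~\ref{lem:RbyS}, in that case $r(S)\equiv\frac{\gamma}{\sigma\beta}=\frac{1}{\sigma b}$ is constant and $R(t)\equiv R(0)$, so the left-hand side of~\eqref{eq:intRS} is $\frac{1}{\sigma b}\ln\!\left(\frac{S(t)}{S(0)}\right)^{\sigma}\cdot\frac{1}{\sigma}$... more directly, the integral is $\int_{S(0)}^{S(t)}\frac{1}{S}\,dS = \ln\frac{S(t)}{S(0)}$ times $\sigma\cdot\frac{1}{\sigma b}=\frac{1}{b}$; but $R(t)=R(0)$ forces $S(t)=S(0)$ by~\eqref{eq:RbyS1}, so both sides are $0$ and the identity holds trivially. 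The differentiation approach handles everything uniformly and avoids this case split, so I would present that as the main argument and relegate the verification of the boundary value at $t=0$ and the constancy check to a sentence.
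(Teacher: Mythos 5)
Your main argument is correct and is essentially the paper's computation reorganized: the paper substitutes the explicit formula \eqref{eq:R(S)-1} for $r(S)$, splits the integral into $\frac{1}{b}\int\frac{dS}{S}$ plus a second piece that it converts to an $R$-integral via \eqref{eq:RbyS1} and \eqref{eq:rs}, and only at the end rewrites $\ln(S(t)/S(0))$ in terms of $R$; you instead perform the change of variables $S\mapsto R$ once at the outset and reduce everything to $\int\bigl(1-\frac{1}{1-\sigma bR}\bigr)dR$, which is slightly cleaner but rests on exactly the same ingredients. Two blemishes are worth flagging. First, in your differentiation check the intermediate identity should read $\frac{-1}{1-\sigma bR}+1=-\frac{\sigma bR}{1-\sigma bR}$ (you dropped the minus sign), although your final reconciliation of the two derivatives via \eqref{eq:rs} still comes out right. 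Second, your disposal of the degenerate case $1-\sigma bR(0)=0$ is incorrect: there $R(t)\equiv R(0)=\frac{\gamma}{\sigma\beta}$ by \eqref{eq:SIRB3}, but $S$ continues to decrease as long as $I>0$, so $S(t)=S(0)$ does not follow and the left-hand side equals $\frac{1}{b}\ln(S(t)/S(0))\neq0$; in that case the right-hand side of \eqref{eq:intRS} is simply undefined (the ratio inside the logarithm is $0/0$), so the lemma only makes sense under $1-\sigma bR(0)\neq0$, an assumption the paper also leaves implicit. Neither issue affects the validity of your main substitution argument.
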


\begin{proof}
We compute 
\begin{align*}
\int\frac{\sigma r(S)}{S}dS= & \sigma\int\frac{1}{S}\left[\frac{1}{\sigma b}\left(1-\left(1-\sigma bR(0)\right)\left(\frac{S}{S(0)}\right)^{\sigma}\right)\right]dS\\
= & \frac{1}{b}\left[\int\frac{1}{S}dS-\left(1-\sigma bR(0)\right)\int\left(\frac{S}{S(0)}\right)^{\sigma}\frac{1}{S}dS\right].
\end{align*}
First we have 
\[
\int_{S(0)}^{S(t)}\frac{1}{S}dS=\ln\left(\frac{S(t)}{S(0)}\right).
\]
From (\ref{eq:RbyS1}) and (\ref{eq:rs}) in the proof of Lemma \ref{lem:RbyS}
we get 
\begin{align*}
\int\left(\frac{S}{S(0)}\right)^{\sigma}\frac{1}{S}dS & =-b\int\left(\frac{1-\sigma bR}{1-\sigma bR(0)}\right)\frac{1}{1-\sigma bR}dR\\
 & =-b\frac{1}{1-\sigma bR(0)}\int dR.
\end{align*}
Therefore, we get 
\[
\int_{S(0)}^{S(t)}\left(\frac{S}{S(0)}\right)^{\sigma}\frac{1}{S}dS=-b\frac{1}{1-\sigma bR(0)}\left(R(t)-R(0)\right).
\]
Then 
\[
\int_{S(0)}^{S(t)}\frac{\sigma r(S)}{S}dS=\frac{1}{b}\ln\left(\frac{S(t)}{S(0)}\right)+\left(R(t)-R(0)\right).
\]
From (\ref{eq:RbyS1}) in the proof of Lemma \ref{lem:RbyS}, \textbf{ 
\[
\ln\left(\frac{S(t)}{S(0)}\right)=\frac{1}{\sigma}\ln\left(\frac{1-\sigma bR(t)}{1-\sigma bR(0)}\right).
\]
}Finally we thus obtain (\ref{eq:intRS}). 
\end{proof}
Then we show explicit expressions for $I$ and $B$ in terms of $S$
and $R$.
\begin{prop}
\label{proposition:IB_Exp}One has that 
\begin{align}
I(t)= & I(0)+\left(S(0)-S(t)\right)+\frac{1}{b}\ln\left(\frac{S(t)}{S(0)}\right)\nonumber \\
 & -\alpha\left\{ \frac{1}{\sigma b}\ln\left(\frac{1-\sigma bR(t)}{1-\sigma bR(0)}\right)+\left(R(t)-R(0)\right)\right\} ,\label{eq:IbySR}\\
B(t)= & B(0)-\left(1-\alpha\right)\left\{ \frac{1}{\sigma b}\ln\left(\frac{1-\sigma bR(t)}{1-\sigma bR(0)}\right)+\left(R(t)-R(0)\right)\right\} .\label{eq:BbySR}
\end{align}
for $t\geq0$. 
\end{prop}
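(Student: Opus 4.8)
The plan is to pass to the phase plane and reduce everything to two elementary integrations: one against $S$ (for $I$) and one in $t$ (for $B$).

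First I would establish the formula for $I$. Since $S'(t)=-\beta S(t)I(t)<0$ for all $t\ge 0$ — note that $S,I>0$, and $I$ never vanishes in finite time because $I'=I\cdot(\beta S-\gamma+\beta\sigma\alpha R)$ with bounded bracket — the map $t\mapsto S(t)$ is strictly decreasing, hence a legitimate change of variable. Dividing (\ref{eq:SIRB2}) by (\ref{eq:SIRB1}), equivalently using (\ref{eq:IS}), and substituting $R=r(S)$ from Lemma~\ref{lem:RbyS} (see (\ref{eq:r(S)})) gives
\[
\frac{dI}{dS}=-1+\frac{1}{bS}-\frac{\alpha\sigma r(S)}{S}.
\]
Integrating from $S(0)$ to $S(t)$, the first two terms contribute $\bigl(S(0)-S(t)\bigr)+\frac1b\ln\bigl(S(t)/S(0)\bigr)$, and the third contributes $-\alpha\int_{S(0)}^{S(t)}\tfrac{\sigma r(S)}{S}\,dS$, which is precisely the integral evaluated in Lemma~\ref{lemma:intRS}. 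Adding $I(0)$ and inserting (\ref{eq:intRS}) yields (\ref{eq:IbySR}).

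For $B$ I would \emph{not} divide (\ref{eq:SIRB4}) by (\ref{eq:SIRB3}), since $R$ need not be monotone ($R'=\gamma I(1-\sigma bR)$ changes sign at $R=1/(\sigma b)$), but instead work directly in $t$. Writing (\ref{eq:SIRB3}) as $R'=\gamma I(1-\sigma bR)$, one checks that
\[
\frac{d}{dt}\!\left[\frac{1}{\sigma b}\ln\bigl(1-\sigma bR(t)\bigr)+R(t)\right]=-\gamma I(t)+R'(t)=-\beta\sigma I(t)R(t),
\]
so by (\ref{eq:SIRB4}) we have $B'(t)=-(1-\alpha)\frac{d}{dt}\bigl[\tfrac{1}{\sigma b}\ln(1-\sigma bR(t))+R(t)\bigr]$; integrating from $0$ to $t$ gives (\ref{eq:BbySR}). (Alternatively, (\ref{eq:BbySR}) follows from the conservation identity $S+I+R+B\equiv 1$ combined with the just-derived (\ref{eq:IbySR}) and the relation (\ref{eq:RbyS1}); but the $t$-antiderivative route is self-contained.)

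I do not expect a genuine obstacle here — the two integrations are routine — so the only thing needing care is bookkeeping: the legitimacy of the change of variable and the well-definedness of the logarithms. The substitution $t\leftrightarrow S$ is valid because $S$ is strictly decreasing; $R$ is deliberately never used as an integration variable, to sidestep its non-monotonicity. The term $\ln(1-\sigma bR(t))$ requires $1-\sigma bR(t)$ to keep a fixed nonzero sign, which is immediate from (\ref{eq:RbyS1}) since $(S(t)/S(0))^{\sigma}>0$ forces $1-\sigma bR(t)$ and $1-\sigma bR(0)$ to share the same sign whenever $1-\sigma bR(0)\neq 0$. The degenerate case $R(0)=\gamma/(\beta\sigma)$ is handled separately: there $R\equiv R(0)$, the integral $\int\sigma r(S)/S\,dS$ becomes elementary, and (\ref{eq:IbySR})--(\ref{eq:BbySR}) are read as the corresponding limits.
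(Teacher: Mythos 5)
Your proposal is correct, and the derivation of (\ref{eq:IbySR}) is exactly the paper's: divide (\ref{eq:SIRB2}) by (\ref{eq:SIRB1}) to get $\frac{dI}{dS}=-1+\frac{1}{bS}-\alpha\sigma\frac{R}{S}$, integrate in $S$, and invoke Lemma \ref{lemma:intRS}. For (\ref{eq:BbySR}) you take a slightly different but equivalent route: the paper divides (\ref{eq:SIRB4}) by (\ref{eq:SIRB1}) to get $\frac{dB}{dS}=-(1-\alpha)\sigma\frac{R}{S}$ and reuses Lemma \ref{lemma:intRS}, whereas you exhibit the explicit time-antiderivative $\frac{1}{\sigma b}\ln(1-\sigma bR(t))+R(t)$ of $-\beta\sigma I(t)R(t)/1$ and integrate in $t$; the two computations are the same identity in different variables, so nothing substantive is gained or lost. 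One remark: your stated reason for avoiding the phase-plane route for $B$ — that $R$ need not be monotone — is moot. By (\ref{eq:RbyS1}) the quantity $1-\sigma bR(t)$ keeps the fixed sign of $1-\sigma bR(0)$, so $R'=\gamma I(1-\sigma bR)$ never changes sign and $R$ is monotone in $t$ (as the paper asserts in Lemma \ref{lem:limitexists}); in any case the paper integrates against $S$, not $R$. Your observation that the logarithms in (\ref{eq:IbySR})--(\ref{eq:BbySR}) must be read as limits (via (\ref{eq:RbyS1})) in the degenerate case $1-\sigma bR(0)=0$ is a legitimate point of care that the paper's proof passes over silently.
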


\begin{proof}
From (\ref{eq:SIRB1}) and (\ref{eq:SIRB2}) we have 
\begin{equation}
\frac{dI}{dS}=-1+\frac{1}{bS}-\alpha\sigma\frac{R}{S}.\label{eq:dIdS}
\end{equation}
We use the separation of variables to obtain (\ref{eq:IbySR}). Using
(\ref{eq:intRS}) in Lemma \ref{lemma:intRS} one obtains 
\[
\int_{S(0)}^{S(t)}\left(\frac{1}{bS}-\alpha\sigma\frac{r(S)}{S}\right)dS=\frac{1}{b}\ln\left(\frac{S(t)}{S(0)}\right)-\alpha\left\{ \frac{1}{\sigma b}\ln\left(\frac{1-\sigma bR(t)}{1-\sigma bR(0)}\right)+\left(R(t)-R(0)\right)\right\} .
\]
Therefore we get (\ref{eq:IbySR}). Next from (\ref{eq:SIRB1}) and
(\ref{eq:SIRB2}) we have 
\begin{equation}
\frac{dB}{dS}=-\left(1-\alpha\right)\sigma\frac{R}{S}\label{eq:dIdS-1}
\end{equation}
Using (\ref{eq:intRS}) in Lemma \ref{lemma:intRS}, one obtains (\ref{eq:BbySR}). 
\end{proof}

\end{document}